 \newtheorem{theorem}{Theorem}
 \newtheorem{lemma}{Lemma}
 \newtheorem{definition}{Definition}
\newtheorem{rem}{Remark}
\algrenewcommand\algorithmicrequire{\textbf{\quad Input:}}
\algrenewcommand\algorithmicensure{\textbf{\quad Output:}}
\newenvironment{proof sketch}[1]{\noindent {\emph{Proof sketch of #1:}}}{\hfill \qed}
\newcommand{\eps}{\varepsilon}
\newcommand{\know}{\text{\textit{know}}}
\newcommand{\true}{\text{\textbf{true}}}
\newcommand{\false}{\text{\textbf{false}}}
\newcommand{\byz}[1]{\textsf{Byz}$\left(#1\right)$}
\newcommand{\om}[1]{\textsf{Om}$\left(#1\right)$}
\newcommand{\NN}{{\mathbb{N}}}
\newcommand{\Prr}[2][]{\Pr_{#1}{\left[#2\right]}}
\begin{document}

\title{Robust Routing Made Easy:\\
Reinforcing Networks Against Non-Benign Faults
\thanks{
Research supported by the Federal Ministry of Education and Research (BMBF), grant 16KISK020K, 2021-2025. 
This article extends work presented at SSS
2017~\cite{DBLP:conf/sss/LenzenM17}.}}


\author{Christoph Lenzen$^1$ \quad Moti Medina$^2$ \quad Mehrdad Saberi$^3$ \quad Stefan Schmid$^4$\\
{\small
$^1$CISPA Helmholtz Center for Information Security, Germany \quad $^2$Faculty of Engineering, Bar-Ilan University, Ramat Gan, Israel}\\
{\small$^3$University of Maryland, College Park, USA \quad $^4$TU Berlin, Germany}
}



\maketitle

\begin{abstract}
With the increasing scale of communication networks,
the likelihood of failures grows as well.
Since these networks form a critical backbone
of our digital society, it is important that they rely on
robust routing algorithms which ensure connectivity
despite such failures. While most modern communication
networks feature robust routing mechanisms, these mechanisms
are often fairly complex to design and verify, as they
need to account for the effects of failures and rerouting
on communication.

This paper conceptualizes the design of robust routing mechanisms,
with the aim to avoid such complexity. In particular,
we showcase \emph{simple} and generic blackbox transformations that increase resilience of routing against independently distributed failures, which allows
to simulate the routing scheme on the original network, even in the presence
of non-benign node failures (henceforth called faults). This is attractive
as the system specification and routing policy can simply be preserved.

We present a scheme for constructing such a reinforced network, given
an existing (synchronous) network and a routing scheme. We prove that
this algorithm comes with small constant overheads, and only requires a minimal
amount of additional node and edge resources; 
in fact, if the failure probability is smaller than $1/n$, 
the algorithm can come without any overhead at all. 
At the same time,
it allows to tolerate a large number of
independent random (node) faults,
asymptotically almost surely.
We complement our analytical results with simulations on different real-world topologies.
\end{abstract}


\section{Introduction}

Communication networks have become a critical backbone
of our digital society. For example, many datacentric applications
related to entertainment, social networking, or health, among others,
are distributed and rely on the high availability and
dependability of the interconnecting network (e.g., a
datacenter network or a wide-area network).
At the same time, with the increasing scale of
today's distributed and networked systems (often relying
on commodity hardware as a design choice
\cite{al2008scalable,barroso2003web,ghemawat2003google}), the number of
failures is likely to increase as well
\cite{gill2011understanding,failures-uninett,link-failures-ip-backbone,single-failure,single-failure-journal}.
It is hence important that communication networks can tolerate
such failures and
remain operational despite the failure of some of their
components.

Robust routing mechanisms aim to provide such guarantees:
by rerouting traffic quickly upon failures,
reachability is preserved. Most communication
networks readily feature robust routing mechanisms,
in the control plane (e.g.
\cite{francois2005achieving,gafni-lr,greenberg2005clean,oran1990rfc1142}), in
the data plane (e.g. \cite{ipfrr,conext18,ddc,mplsfrr}), as well as on higher
layers (e.g. \cite{andersen2001resilient}).
However, the design of such robust routing mechanisms is
still challenging and comes with tradeoffs, especially if
resilience should extend to multiple failures \cite{chiesa2016resiliency}.

Besides a fast reaction time and re-establishing connectivity, the
resulting routes typically need to fulfill certain additional properties,
related to the network specification and policy.
Ensuring such properties however can be fairly complex,
as packets inevitably follow different paths after failures.
Interestingly, while the problem of how to re-establish reachability
after failures is well explored,
the problem of providing specific properties on the failover
paths is much less understood.

This paper conceptualizes the design of robust routing, presenting a new approach to robust routing which conceptually differs
significantly from existing literature by relying on \emph{proactive reinforcement} (rather than reaction to failures).
In particular, our approach aims to overcome the complexities involved in designing
robust routing algorithms, by simply sticking to the original
network and routing specification.
To achieve this, our approach is to mask the effects of failures
using \emph{redundancy}: in the spirit of error correction,
we proactively reinforce networks by adding a minimal number of
additional nodes and links, rather than
coping with failed components when they occur.
The latter is crucial
for practicability: significant refactoring of existing systems
and/or accommodating substantial design constraints is rarely
affordable.

In this paper, to ensure robustness while maintaining
the network and routing specification, we aim to
provide a high degree of fault-tolerance,
which goes beyond simple equipment and failstop failures,
but accounts for more general \emph{faults} which include non-benign
failures of entire nodes.

While our approach presented in this paper will be general
and applies to \emph{any} network topology, we are particularly
interested in datacenter networks (e.g., based on low-dimensional
hypercubes or $d$-dimensional tori \cite{guo2009bcube,leighton2014introduction})
as well as in wide-area
networks (which are typically sparse \cite{spring2002measuring}).
We will show that our approach works especially well for these networks.

\subsection{The Challenge}

More specifically,
we are given a network $G=(V,E)$ and a routing scheme, i.e.,
a set of routes in $G$.
We seek to reinforce the network $G$ by
allocating additional resources, in terms of nodes and edges,
and to provide a corresponding routing strategy to simulate the routing scheme
on the original network despite non-benign node failures.

The main goal is to maximize the probability that the network withstands
failures (in particular, random failures of entire nodes),
while minimizing the resource overhead.
 Furthermore, we want to ensure that the network transformation is simple
to implement, and that it interferes as little as possible with the existing system design and operation, e.g., it
does not change the reinforced system's specification.

Toward this goal, in this paper, we make a number of simplifying assumptions.
First and most notably, we assume independent failures,
that is, we aim at masking faults with little or no correlation among each other.
Theoretically, this is motivated by the fact that
guaranteeing full functionality despite having $f$ adversarially placed faults trivially requires redundancy (e.g., node degrees) larger than $f$.
There is also practical motivation to consider independent faults:
many distributed systems proactively avoid fault clusters
\cite{feldmann2016netco,scheideler2005spread} and there is also empirical
evidence that in certain scenarios, failures are only weakly correlated \cite{lee2010diverse}.

Second, we treat nodes and their outgoing links as fault-containment regions (according to \cite{kopetz03}), i.e., they are the basic components our systems are comprised of.
This choice is made for the sake of concreteness;
similar results could be obtained when considering, e.g., edge failures, without changing the gist of results or techniques.
With these considerations in mind, the probability of uniformly random
node failures that the reinforced system can tolerate is a canonical choice for measuring resilience.

Third, we focus on synchronous networks, for
several reasons:
synchrony not only helps in handling faults, both on the theoretical level (as illustrated by the famous FLP theorem~\cite{fischer85impossibility}) and for ensuring correct implementation, but it also
simplifies presentation, making it easier to focus on the proposed concepts.
In this sense, we believe
that our approach is of particular interest in the context of real-time systems,
where the requirement of meeting hard deadlines makes synchrony an especially attractive choice.

\subsection{Contributions and Techniques}

This paper proposes a novel and simple approach to robust routing,
which decouples the task of designing a reinforced network from the task of
designing a routing scheme over the input network. By virtue of this decoupling,
our approach supports arbitrary routing schemes and objectives,
from load minimization to throughput maximization and beyond,
in various models of computation, e.g., centralized or distributed, randomized
or deterministic, online or offline, or oblivious.

We first consider a trivial approach:
we simply replace each node by $\ell \in \NN$ copies
and for each edge we connect each pair of copies of its endpoints,
where $\ell$ is a constant.\footnote{Choosing concreteness over generality,
we focus on the, in our view, most interesting case of constant $\ell$. It is straightforward to generalize the analysis.}
Whenever a message would be sent over an edge in the original graph,
it should be sent over each copy of the edge in the reinforced graph.
If not too many copies of a given node fail, this enables each receiving copy to recover the correct message.
Thus, each non-faulty copy of a node can run the routing algorithm as if it were the original node, guaranteeing that it has the same view of the system state as its original in the corresponding fault-free execution of the routing scheme on the original graph.

When analyzing this approach,
we observe that asymptotically almost surely (a.a.s., with probability $1-o(1)$) and with $\ell=2f+1$, this reinforcement can sustain an independent probability $p$ of $f$ \emph{Byzantine} node failures~\cite{pease80}, for any $p\in o(n^{-1/(f+1)})$, i.e., $f$  nodes may violate the protocol in any arbitrary way (and may hence also collude).
This threshold is sharp up to (small) constant factors: for $p\in \omega(n^{-1/(f+1)})$, a.a.s.\ there is some node for which all of its copies  fail.
If we restrict the fault model to omission faults
(faulty nodes may skip sending some messages but otherwise act according to the protocol), $\ell=f+1$ suffices.
The cost of this reinforcement is that the number of nodes and edges increase by factors of $\ell$ and $\ell^2$, respectively.
Therefore, already this simplistic solution can support non-crash faults of probability $p\in o(1/\sqrt{n})$ at a factor-$4$ overhead.

We note that the simulation introduces no large computational overhead and
does not change the way the system works, enabling to use it as a blackbox.
Also randomized algorithms can be simulated in a similar fashion,
provided that all copies of a node have access to a shared source of randomness.
Note that this requirement is much weaker than globally shared randomness:
it makes sense to place the copies of a node in physical proximity to approximately preserve the geometrical layout of the physical realization of the network topology.

Our approach above raises the question whether
we can reduce the involved overhead further.
In this paper, we will answer this question positively:
We propose to apply the above strategy only to a small
subset $E'$ of the edge set.
Denoting by $v_1,\ldots,v_{\ell}$ the copies of node $v\in V$, for
any remaining edge $\{v,w\}\in E\setminus E'$ we add only edges
$\{v_i,w_i\}$, $i\in [\ell]$, to the reinforced graph.
The idea is to choose $E'$ in a way such that the connected components
induced by $E\setminus E'$ are of constant size, yet $|E'|=\varepsilon |E|$.
This results in the same asymptotic threshold for $p$, while the number of edges of the reinforced graph drops to $((1-\varepsilon)\ell+\varepsilon \ell^2)|E|$.
For any constant choice of $\varepsilon$, we give constructions with this property for grids or tori of constant dimension and minor-free graphs of bounded degree.
Again, we consider the case of $f=1$ of particular interest:
in many typical network topologies, we can reinforce the network to boost the failure probability that can be tolerated from $\Theta(1/n)$ to $\Omega(1/\sqrt{n})$ by roughly doubling (omission faults) or tripling (Byzantine faults) the number of nodes and edges.

The redundancy in this second construction is near-optimal under the constraint that we want to simulate an arbitrary routing scheme in a blackbox fashion,
as it entails that we need a surviving copy of each edge, and thus in particular each node.
In many cases, the paid price will be smaller than the price for making each individual component sufficiently reliable to avoid this overhead.
Furthermore, we will argue that the simplicity of our constructions enables us to re-purpose the redundant resources in applications with less strict reliability requirements.

Our results show that while approach is general and can be applied to any
existing network topology (we will describe and analyze valid reinforcements for
our faults models on general graphs), it can be refined and is particularly
interesting in the context of networks that
admit suitable partitionings. Such networks include
sparse, minor-free graphs, which are practically relevant topologies in
wide-area networks, as well as torus graphs and low-dimensional
hypercubes, which arise in datacenters and parallel architectures.

To complement our theoretical findings and investigate the reinforcement
cost in real networks, we conducted experiments on the Internet Topology Zoo~\cite{knight2011internet}.
We find that our approach achieves robustness at significantly lower cost compared to
the naive replication strategy often employed in dependable networks.

\subsection{Putting Things Into Perspective}

In contrast to much existing robust routing literature on  \emph{reactive}
approaches to link failures~\cite{frr-survey} (which come with a delay),
we consider a proactive approach by enhancing the network with redundancy.
Our proactive approach also allows us to replicate the routing scheme (and hence the network policy) on the new network.
In particular, we show that if the failure probability is smaller than $1/n$, there is a good probability that our approach works even without any overhead at all. 
Furthermore, there are two ways in which our system can be used. One approach is to replicate the entire node (including the compute part), and then forward the traffic to its two associated peers. Alternatively, traffic can also simply be replicated to multiple NICs, without additional compute requirements, depending on the failure model. More generally, our contribution can also be seen more abstractly and the robust routing happen on a logical level, depending on the failure scenario. 
Also, we show that in the absence of a valid message, it can simply be ignored, as the rest of the system continues to perform

The most closely related work to ours is NetCo~\cite{disn16netco},
which also relies on network reinforcement and can handle malicious behavior.
NetCo is is based on a robust
combiner concept known from cryptography, and complements each router with two additional routers.
Using software-defined networking,  traffic is replicated across the three (untrusted) devices and then merged again, using a consensus algorithm. While a high degree of robustness is achieved, the three-fold overhead is significant. More importantly, however, in contrast to our approach, Netco requires special hardware for splitting and merging the traffic; while the functionality of this hardware can be simple, it still needs to be trusted. The consensus requirement dramatically reduces the throughput, as shown in the empirical evaluation of NetCo in~\cite{disn16netco}.

Our solution does not require such components and is hence not only more practical but also significantly more performant.

\subsection{Organization}

In \S~\ref{sec:applications}, we sketch the properties of our approach and state a number of potential applications. In \S~\ref{sec:prelim}, we formalize the fault models that we tackle in this article alongside the notion of a valid reinforcement and its complexity measures. In \S~\ref{sec:strongbyz} and \S~\ref{sec:strong_om}, we study valid reinforcements on general graphs, and in \S~\ref{sec:eff}, we study more efficient reinforcements for specific graphs.
We complement our analytical results with an empirical simulation study in
\S~\ref{sec:eval}.
In \S~\ref{sec:disc} we raise a number of points in favor of the reinforcement approach. We review related work in
\S~\ref{sec:relwork}, and we conclude and present a number of interesting
follow-up questions in \S~\ref{sec:conc}.

\section{High-level Overview: Reinforcing Networks}\label{sec:applications}

Let us first give an informal overview of our blackbox transformation
for reinforcing networks (for formal specification see \S~\ref{sec:prelim}), as well as its guarantees and preconditions.

\paragraph{Assumptions on the Input Network}
We have two main assumptions on the network at hand: (1)~We consider synchronous routing networks, and (2)~each node in the network (alongside its outgoing links) is a fault-containment region, i.e., it fails independently from other nodes.
We do not make any assumptions on the network topology, but will provide specific
optimizations for practically relevant topologies (such as sparse, minor-free networks
or hypercubes) in \S~\ref{sec:eff}.

\paragraph{Valid Reinforcement Simulation Guarantees}
Our reinforcements create a number of copies of each node. We have each non-faulty copy of a node run the routing algorithm as if it were the original node, guaranteeing that it has the same view of the system state as its original in the corresponding fault-free execution of the routing scheme on the original graph. Moreover, the simulation fully preserves all guarantees of the schedule, including its timing, and introduces no big computational overhead.
This assumption is simple to meet in stateless networks, while it requires synchronization primitives in case of stateful network functions.

\paragraph{Unaffected Complexity and Cost Measures}
Routing schemes usually revolve around objective functions such as load minimization, maximizing the throughput, minimizing the latency, etc., while aiming to minimize complexity related to, e.g., the running time for centralized algorithms, the number of rounds for distributed algorithms, the message size, etc. Moreover, there is the degree of uncertainty that can be sustained, e.g., whether the input to the algorithm is fully available at the beginning of the computation (offline computation) or revealed over time (online computation). Our reinforcements preserve all of these properties, as they operate in a blackbox fashion. For example, our machinery readily yields various fault-tolerant packet routing algorithms in the Synchronous Store-and-Forward model by Aiello et.\ al~\cite{AKOR}. More specifically, from~\cite{spaaEvenMP15} we obtain a centralized deterministic online algorithm on unidirectional grids of constant dimension that achieves a competitive ratio which is polylogarithmic in the number of nodes of the input network w.r.t.\ throughput maximization. Using~\cite{EvenMR16} instead, we get a centralized randomized offline algorithm on the unidirectional line with constant approximation ratio w.r.t.\ throughput maximization. In the case that deadlines need to be met the approximation ratio is, roughly, $O(\log^* n)$~\cite{RackeR11}. As a final example, one can obtain from~\cite{AKK} various online distributed algorithms with sublinear competitive ratios w.r.t.\ throughput maximization.

\paragraph{Cost and Gains of the Reinforcement}
The price of adding fault-tolerance is given by the increase in the network size, i.e., the number of nodes and edges of the reinforced network in comparison to the original one. Due to the assumed independence of node failures, it is straightforward to see that the (uniform) probability of sustainable node faults increases roughly like $n^{-1/(f+1)}$ in return for (i) a linear-in-$f$ increase in the number of nodes and (ii) an increase in the number of edges that is quadratic in $f$. We then proceed to improve the construction for grids and minor-free constant-degree graphs to reduce the increase in the number of edges to being roughly linear in $f$. Based on this information, one can then assess the effort in terms of these additional resources that is beneficial, as less reliable nodes in turn are cheaper to build, maintain, and operate. We also note that, due to the ability of the reinforced network to ensure ongoing unrestricted operability in the presence of some faulty nodes, faulty nodes can be replaced or repaired \emph{before} communication is impaired or breaks down.

\paragraph{Preprocessing}
Preprocessing is used, e.g., in computing routing tables in Oblivious Routing~\cite{racke2009survey,esa19}.
The reinforcement simply uses the output of such a preprocessing stage in the same manner as the original algorithm. In other words, the preprocessing is done on the input network and its output determines the input routing scheme. In particular, the preprocessing may be randomized and does not need to be modified in any way.

\paragraph{Randomization}
Randomized routing algorithms can be simulated as well, provided that all copies of a node have access to a shared source of randomness. We remark that, as our scheme locally duplicates the network topology, it is natural to preserve the physical realization of the network topology in the sense that all (non-faulty) copies of a node are placed in physical proximity. This implies that this constraint is much easier to satisfy than globally shared randomness.

\section{Preliminaries}\label{sec:prelim}
We consider synchronous routing networks.
Formally, the network is modeled as a directed graph $G=(V,E)$, where $V$ is the set of $n\triangleq |V|$ vertices, and $E$ is the set of $m\triangleq |E|$ edges (or links).
Each node maintains a state, based on which it decides in each round for each of its outgoing links which message to transmit.
We are not concerned with the inner workings of the node, i.e., how the state is updated;
rather, we assume that we are given a scheduling algorithm performing the task of updating this state and use it in our blackbox transformations.
In particular, we allow for online, distributed, and randomized algorithms.

\paragraph{Probability-$p$ Byzantine Faults \byz{p}}
The set of faulty nodes $F\subseteq V$ is determined by sampling each $v\in V$ into $F$ with independent probability $p$. Nodes in $F$ may deviate from the protocol in arbitrary ways, including delaying, dropping, or forging messages, etc.
\paragraph{Probability-$p$ Omission Faults \om{p}}
The set of faulty nodes $F\subseteq V$ is determined by sampling each $v\in V$ into $F$ with independent probability $p$. Nodes in $F$ may deviate from the protocol by not sending a message over an outgoing link when they should. We note that it is sufficient for this fault model to be satisfied \emph{logically.} That is, as long as a correct node can identify incorrect messages, it may simply drop them, resulting in the same behavior of the system at all correct nodes as if the message was never sent.

\paragraph{Simulations and Reinforcement}
For a given network $G=(V,E)$ and a scheduling algorithm $A$, we will seek to \emph{reinforce} $(G,A)$ by constructing $G'=(V',E')$ and scheduling algorithm $A'$ such that the original algorithm $A$ is \emph{simulated} by $A'$ on $G'$, where $G'$ is subject to random node failures. We now formalize these notions. First, we require that there is a surjective mapping $P:V'\to V$; fix $G'$ and $P$, and choose $F'\subseteq V'$ randomly as specified above.
\begin{definition}[Simulation under \byz{p}]
Assume that in each round $r\in \NN$, each $v'\in V'\setminus F'$ is given the same input by the environment as $P(v')$. $A'$ is a \emph{simulation} of $A$ under \byz{p}, if for each $v\in V$, a strict majority of the nodes $v'\in V'$ with $P(v')=v$ computes in each round $r\in \NN$ the state of $v$ in $A$ in this round. The simulation is \emph{strong}, if not only for each $v\in V$ there is a strict majority doing so, but all $v'\in V'\setminus F'$ compute the state of $P(v')$ in each round.
\end{definition}
\begin{definition}[Simulation under \om{p}]
Assume that in each round $r\in \NN$, each $v'\in V'$ is given the same input by the environment as $P(v')$. $A'$ is a \emph{simulation} of $A$ under \om{p}, if for each $v\in V$, there is $v'\in V'$ with $P(v')=v$ that computes in each round $r\in \NN$ the state of $v$ in $A$ in this round. The simulation is \emph{strong}, if each $v'\in V'$ computes the state of $P(v')$ in each round.
\end{definition}
\begin{definition}[Reinforcement]
A \emph{(strong) reinforcement} of a graph $G=(V,E)$ is a graph $G'=(V',E')$, a surjective mapping $P\colon V'\to V$, and a way of determining a scheduling algorithm $A'$ for $G'$ out of scheduling algorithm $A$ for $G$. The reinforcement is \emph{valid} under the given fault model (\byz{p} or \om{p}) if $A'$ is a (strong) simulation of $A$ a.a.s.
\end{definition}

\paragraph*{Resources and Performance Measures.}
We use the following performance measures.
\begin{enumerate}[(i)]
\item The probability $p$ of independent node failures that can be sustained a.a.s.
\item The ratio $\nu\triangleq |V'|/|V|$, i.e., the relative increase in the number of nodes.
\item The ratio $\eta \triangleq|E'|/|E|$, i.e., the relative increase in the number of edges.
\end{enumerate}
We now briefly discuss, from a practical point of view, why we do not explicitly consider further metrics that are of interest.

\subsection*{Other Performance Measures}
\begin{compactitem}
\item \emph{Latency:}
As our reinforcements require (time-preserving) simulation relations, in terms of rounds, there is no increase in latency whatsoever.
However, we note that (i) we require all copies of a node to have access to the input (i.e., routing requests) of the simulated node and (ii) our simulations require to map received messages in $G'$ to received messages of the simulated node in $G$.
Regarding (i), recall that it is beneficial to place all copies of a node in physical vicinity, implying that the induced additional latency is small.
Moreover, our constructions naturally lend themselves to support redundancy in computations as well, by having each copy of a node perform the tasks of its original;
in this case, (i) comes for free.
Concerning (ii), we remark that the respective operations are extremely simple;
implementing them directly in hardware is straightforward and will have limited impact on latency in most systems.
\item \emph{Bandwidth/link capacities.}
We consider the uniform setting in this work.
Taking into account how our simulations operate, one may use the ratio $\eta$ as a proxy for this value.
\item \emph{Energy consumption.}
Regarding the energy consumption of links, the same applies as for bandwidth.
The energy nodes use for routing computations is the same as in the original system, except for the overhead induced by Point (ii) we discussed for latency.
Neglecting the latter, the energy overhead is in the range $[\min\{\nu,\eta\},\max\{\nu,\eta\}]$.
\item \emph{Hardware cost.}
Again, neglecting the computational overhead of the simulation, the relative overhead lies in the range $[\min\{\nu,\eta\},\max\{\nu,\eta\}]$
\end{compactitem}
In light of these considerations, we focus on $p$, $\nu$, and $\eta$ as key metrics for evaluating the performance of our reinforcement strategies.

\section{Strong Reinforcement under \texorpdfstring{\byz{p}}{Byz(p)}}\label{sec:strong_byz}\label{sec:strongbyz}

We now present and analyze valid reinforcements
under \texorpdfstring{\byz{p}}{Byz(p)}
for our faults model
on general graphs.
Given are the input network $G=(V,E)$ and scheduling algorithm $A$. Fix a parameter $f\in \NN$ and set $\ell = 2f+1$.

\paragraph{Reinforced Network $G'$}
We set $V'\triangleq V\times [\ell]$, where $[\ell]\triangleq \{1,\ldots,\ell\}$, and denote $v_i\triangleq (v,i)$. Accordingly, $P(v_i)\triangleq v$. We define $E'\triangleq \{(v',w')\in V'\times V'\,|\,(P(v'),P(w'))\in E\}$.

\paragraph{Strong Simulation $A'$ of $A$}
Consider node $v'\in V'\setminus F'$. We want to maintain the invariant that in each round, each such node has a copy of the state of $v=P(v')$ in $A$. To this end, $v'$
\begin{compactenum}[\bfseries (1)]
\item initializes local copies of all state variables of $v$ as in $A$,
\item \sloppy sends on each link $(v',w')\in E'$ in each round the message $v$ would send on $(P(v'),P(w'))$ when executing $A$, and
\item for each neighbor $w$ of $P(v')$ and each round $r$, updates the local copy of the state of $A$ as if $v$ received the message that has been sent to $v'$ by at least $f+1$ of the nodes $w'$ with $P(w')=w$ (each one using edge $(w',v')$).
\end{compactenum}
Naturally, the last step requires such a majority to exist; otherwise, the simulation fails. We show that $A'$ can be executed and simulates $A$ provided that for each $v\in V$, no more than $f$ of its copies are in $F'$.
\begin{lemma}\label{lemma:sim_byz}
If for each $v\in V$, $|\{v_i\in F'\}|\leq f$, then $A'$ strongly simulates $A$.
\end{lemma}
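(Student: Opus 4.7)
The natural approach is induction on the round number $r$, with the stated bound on faulty copies ensuring that at every round each non-faulty copy $v' \in V' \setminus F'$ holds exactly the state of $v = P(v')$ in the corresponding execution of $A$ on $G$.

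For the base case $r = 0$, step (1) of $A'$ directly initializes the local state variables at every $v'$ to match those of $v$ in $A$, so the invariant holds. For the inductive step, assume that at the beginning of round $r$ every non-faulty copy $v'$ has state identical to that of $P(v')$ under $A$. By step (2), each non-faulty copy $w' \in V' \setminus F'$ transmits on every outgoing edge $(w', v') \in E'$ exactly the message $w = P(w')$ would send on $(w, v) \in E$ in $A$. Hence for each edge $(w, v) \in E$ and each non-faulty copy $v'$ with $P(v') = v$, at least $\ell - f = f+1$ of the copies $w'$ with $P(w') = w$ deliver the correct message to $v'$ along $(w', v')$.

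The key point is that the rule in step (3) is both applicable and unambiguous. Applicability follows because, by hypothesis, $|\{w_i \in F'\}| \leq f$, so at least $f+1$ non-faulty copies of $w$ send the same correct message. Unambiguity follows because the remaining at most $f$ faulty copies cannot produce any alternative message with $f+1$ identical occurrences; thus the message identified by step (3) is uniquely determined and equals the message $v$ receives from $w$ in $A$. Consequently $v'$ performs precisely the same state update as $v$ does in $A$, so the invariant propagates to round $r+1$.

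Since the invariant holds at every non-faulty $v'$ in every round, $A'$ is a strong simulation of $A$ by definition. The only place where the argument could go wrong is the unambiguity of the majority in step (3); the hard part, conceptually, is just to observe that $\ell = 2f+1$ together with the assumption $|\{v_i \in F'\}| \leq f$ simultaneously guarantees existence of a correct $(f+1)$-majority and precludes any spurious one, which is exactly what makes $\ell = 2f+1$ the right choice for Byzantine faults.
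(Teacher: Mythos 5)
Your proof is correct and follows essentially the same route as the paper's: induction on the round number, anchored at initialization, with the observation that at least $\ell - f = f+1$ non-faulty copies of each neighbor deliver the correct message. Your additional remark that the at most $f$ faulty copies cannot manufacture a competing $(f+1)$-fold message---so the majority rule in step (3) is unambiguous---is a point the paper leaves implicit, and it is a worthwhile clarification rather than a deviation.
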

\begin{proof}
We show the claim by induction on the round number $r\in \NN$, where we consider the initialization to anchor the induction at $r=0$. For the step from $r$ to $r+1$, observe that because all $v'\in V'\setminus F'$ have a copy of the state of $P(v')$ at the end of round $r$ by the induction hypothesis, each of them can correctly determine the message $P(v')$ would send over link $(v,w)\in E$ in round $r+1$ and send it over each $(v',w')\in E$ with $P(w')=w$. Accordingly, each $v'\in V'\setminus F'$ receives 
 the message $A$ would send over $(w,v) \in E$  
 from each $w'\in V'\setminus F'$ with $P(w')=w$ (via the link $(w',v')$). By the assumption of the lemma, we have at least $\ell-f=f+1$ such nodes, implying that $v'$ updates the local copy of the state of $A$ as if it received the same messages as when executing $A$ in round $r+1$. Thus, the induction step succeeds and the proof is complete.
\end{proof}

\paragraph{Resilience of the Reinforcement}
We now examine how large the probability $p$ can be for the precondition of Lemma~\ref{lemma:sim_byz} to be satisfied a.a.s.
\begin{theorem}\label{thm:strong_byz}
If $p \in o(n^{-1/(f+1)})$, the above construction is a valid strong reinforcement for the fault model \byz{p}. If $G$ contains $\Omega(n)$ nodes with non-zero outdegree, $p\in \omega(n^{-1/(f+1)})$ implies that the reinforcement is not valid.
\end{theorem}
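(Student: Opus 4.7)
The plan is to reduce both halves of the theorem to controlling, for a single node $v \in V$, the probability
\[ q \triangleq \Pr\bigl[|\{i \in [\ell] : v_i \in F'\}| \ge f+1\bigr] \]
that more than $f$ of $v$'s $\ell = 2f+1$ copies are faulty. Since copies fail independently with probability $p$, a binomial estimate yields
\[ \textstyle \binom{2f+1}{f+1} p^{f+1}(1-p)^{f} \;\le\; q \;\le\; \binom{2f+1}{f+1} p^{f+1}, \]
so $q = \Theta(p^{f+1})$ whenever $p = o(1)$; the regime $p \in \Omega(1)$ makes both halves of the claim immediate.

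For the positive direction, Lemma~\ref{lemma:sim_byz} reduces strong validity to the event that every $v \in V$ has at most $f$ faulty copies. A union bound over the $n$ vertices caps its complement by $n q = O(n p^{f+1})$, which is $o(1)$ precisely when $p \in o(n^{-1/(f+1)})$.

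For the negative direction, let $V_0 \subseteq V$ be the $\Omega(n)$ vertices of non-zero outdegree and let $X$ count those $v \in V_0$ with more than $f$ faulty copies. Because the copies of distinct vertices occupy disjoint subsets of $V'$, the indicator events are mutually independent, so $1-x \le e^{-x}$ gives
\[ \Pr[X = 0] \;=\; (1-q)^{|V_0|} \;\le\; \exp\bigl(-\Omega(n p^{f+1})\bigr), \]
which tends to $0$ since $n p^{f+1} \to \infty$ in the regime $p \in \omega(n^{-1/(f+1)})$. It remains to convert ``$X \ge 1$'' into a genuine simulation failure: given $v \in V_0$ with more than $f$ faulty copies, I would pick an out-edge $(v,w) \in E$; the $\ge f+1$ Byzantine copies of $v$ collude in some round to send the same wrong message on every $(v_i, w_j) \in E'$, so each correct $w_j$ sees an $(f+1)$-majority of identical bogus messages among the copies of $v$ and, by Step~(3), updates its local state to an incorrect value. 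Hence under worst-case Byzantine behaviour no strict majority of copies of $w$ computes the state of $w$, and $A'$ is not a simulation of $A$.

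The main obstacle is not the probabilistic core---routine once one notices that per-vertex events are independent and that $q = \Theta(p^{f+1})$---but rather the last step, turning ``some bad $v$'' into a provable simulation failure. The non-zero-outdegree hypothesis is exactly what the adversary needs in order to propagate $v$'s internal corruption into wrong states at correct copies of a neighbor, and it is what makes the constants in the $p^{f+1}$ threshold sharp.
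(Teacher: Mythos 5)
Your proposal is correct and follows essentially the same route as the paper: Lemma~\ref{lemma:sim_byz} plus a union bound over the $n$ vertices for the positive direction, and for the negative direction the per-vertex lower bound $\binom{2f+1}{f+1}p^{f+1}(1-p)^f$ combined with independence across vertices, followed by an explicit adversarial behaviour at a node with $\ge f+1$ faulty copies that has non-zero outdegree. The only (immaterial) differences are that you bound $q$ from above by a union bound over $(f+1)$-subsets rather than summing the binomial tail, and your adversary colludes on a forged message where the paper's simply omits messages; both suffice to break even the non-strong simulation.
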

\begin{proof}
By Lemma~\ref{lemma:sim_byz}, $A'$ strongly simulates $A$ if for each $v\in V$, $|\{v_i\in F'\}|\leq f$. If $p \in o(n^{-1/(f+1)})$, using $\ell=2f+1$ and a union bound we see that the probability of this event is at least
\begin{align*}
&1-n\sum_{j=f+1}^{2f+1}\binom{2f+1}{j}p^j(1-p)^{2f+1-j}\\
\geq\,& 1-n \sum_{j=f+1}^{2f+1}\binom{2f+1}{j}p^{j}\\
\geq\,& 1-n \binom{2f+1}{f+1}p^{f+1}\sum_{j=0}^f p^j\\
\geq\,& 1-n (2e)^f\cdot\frac{p^{f+1}}{1-p}= 1-o(1).
\end{align*}
Here, the second to last step uses that $\binom{a}{b}\leq (ae/b)^b$ and the final step exploits that $p\in o(n^{-1/(f+1)})$.

For the second claim, assume w.l.o.g.\ $p\leq 1/3$, as increasing $p$ further certainly increases the probability of the system to fail. For any $v\in V$, the probability that $|\{v_i\in F'\}|> f$ is independent of the same event for other nodes and larger than
\begin{equation*}
\binom{2f+1}{f+1}p^{f+1}(1-p)^f\geq \left(\frac{3}{2}\right)^f p^{f+1}(1-p)^f\geq p^{f+1},
\end{equation*}
since $\binom{a}{b}\geq (a/b)^b$ and $1-p\geq 2/3$. Hence, if $G$ contains $\Omega(n)$ nodes $v$ with non-zero outdegree, $p\in \omega(n^{-1/(f+1)})$ implies that the probability that there is such a node $v$ for which $|\{v_i\in F'\}|> f$ is at least
\begin{equation*}
1-\left(1-p^{f+1}\right)^{\Omega(n)}\subseteq 1-\left(1-\omega\left(\frac{1}{n}\right)\right)^{\Omega(n)}= 1-o(1).
\end{equation*}
If there is such a node $v$, there are algorithms $A$ and inputs so that $A$ sends a message across some edge $(v,w)$ in some round. If faulty nodes do not send messages in this round, the nodes $w_i\in V'\setminus F'$ do not receive the correct message from more than $f$ nodes $v_i$ and the simulation fails. Hence, the reinforcement cannot be valid.
\end{proof}
\begin{rem}
For constant $p$, one can determine suitable values of $f\in \Theta(\log n)$ using Chernoff's bound. However, as our focus is on small (constant) overhead factors, we refrain from presenting the calculation here.
\end{rem}

\paragraph{Efficiency of the Reinforcement}
For $f\in \NN$, we have that $\nu = \ell = 2f+1$ and $\eta = \ell^2 = 4f^2 + 4f + 1$, while we can sustain $p\in o(n^{-1/(f+1)})$.
In the special case of $f=1$, we improve from $p\in o(1/n)$ for the original network to $p\in o(1/\sqrt{n})$ by tripling the number of nodes.
However, $\eta = 9$, i.e., while the number of edges also increases only by a constant, it seems too large in systems where the limiting factor is the amount of links that can be afforded.

\section{Strong Reinforcement under \texorpdfstring{\om{p}}{Om(p)}}\label{sec:strong_om}
The strong reinforcement from the previous section is, trivially, also a strong reinforcement under \om{p}. However, we can reduce the number of copies per node for the weaker fault model. Given are the input network $G=(V,E)$ and scheduling algorithm $A$. Fix a parameter $f\in \NN$ and, this time, set $\ell = f+1$.

\paragraph{Reinforced Network $G'$}
We set $V'\triangleq V\times [\ell]$ and denote $v_i\triangleq (v,i)$. Accordingly, $P(v_i)\triangleq v$. We define $E'\triangleq \{(v',w')\in V'\times V'\,|\,(P(v'),P(w'))\in E\}$.

\paragraph{Strong Simulation $A'$ of $A$}
Each node\footnote{Nodes suffering omission failures still can simulate $A$ correctly.} $v'\in V'$
\begin{compactenum}[\bfseries (1)]
\item initializes local copies of all state variables of $v$ as in $A$,
\item \sloppy sends on each link $(v',w')\in E'$ in each round the message $v$ would send on $(P(v'),P(w'))$ when executing $A$, and
\item for each neighbor $w$ of $P(v')$ and each round $r$, updates the local copy of the state of $A$ as if $v$ received the (unique) message that has been sent to $v'$ by some of the nodes $w'$ with $P(w')=w$ (each one using edge $(w',v')$).
\end{compactenum}
Naturally, the last step assumes that some such neighbor sends a message and all $w'$ with $P(w')$ send the same such message; otherwise, the simulation fails. We show that $A'$ can be executed and simulates $A$ provided that for each $v\in V$, no more than $f$ of its copies are in $F'$.
\begin{lemma}\label{lemma:sim_om}
If for each $v\in V$, $|\{v_i\in F'\}|\leq f$, $A'$ strongly simulates $A$.
\end{lemma}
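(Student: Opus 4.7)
The plan is to mirror the inductive argument used in the proof of Lemma~\ref{lemma:sim_byz}, but exploiting the weaker fault model to drop the majority requirement. I would proceed by induction on the round number $r\in \NN$, with $r=0$ serving as the base case handled by Step (1) of $A'$: every $v'\in V'$ (including omission-faulty ones, which still execute local computation correctly, as noted in the footnote) initializes its local copy of the state of $P(v')$ exactly as $A$ would.

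For the inductive step from $r$ to $r+1$, assume that at the end of round $r$ every $v'\in V'$ holds the correct local copy of the state of $v=P(v')$ in $A$. First I would observe that Step (2) therefore has every $v'$ compute, for each outgoing edge $(v',w')\in E'$, exactly the message that $A$ would send over $(P(v'),P(w'))$. Non-faulty copies actually deliver this message; omission-faulty copies may either deliver it or drop it, but crucially they never deliver a different message. In particular, for any fixed neighbor $w$ of $v=P(v')$, all of the messages that $v'$ actually receives from copies $w'$ with $P(w')=w$ are identical and equal to the correct message $A$ would send over $(w,v)$. This establishes the uniqueness premise needed by Step (3).

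Next I would verify that at least one such message is delivered. By assumption, $|\{w_i\in F'\}|\le f$, and since $\ell=f+1$ there is at least one non-faulty copy $w'$ of $w$; this copy sends the correct message along $(w',v')$. Hence $v'$ receives the correct message from some copy of $w$ for every neighbor $w$ of $v$, and Step (3) updates the local copy of the state exactly as $A$ would update the state of $v$ in round $r+1$. This closes the induction and establishes that every $v'\in V'$ (faulty or not, since the state-tracking argument never relied on the sending behavior of $v'$ itself) computes the state of $P(v')$ in every round, giving strong simulation.

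The only subtle point — and the place where this proof genuinely differs from the Byzantine one — is the uniqueness clause in Step (3): it must be justified that an omission-faulty copy cannot inject a spurious or altered message. I would make this explicit by appealing directly to the definition of \om{p}, which restricts faulty behaviour to skipping outgoing transmissions. Everything else is a straightforward simplification of the Byzantine argument, so I do not expect significant obstacles beyond making the base case, the uniqueness clause, and the existence of a surviving sender cleanly fit together.
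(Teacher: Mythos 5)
Your proof is correct and follows exactly the route the paper intends: the paper's own proof of this lemma is a one-line remark that the argument is analogous to Lemma~\ref{lemma:sim_byz} except that, under omission faults, a single surviving copy per node suffices, and your write-up simply fills in that induction (base case from initialization, existence of one non-faulty sender among the $\ell=f+1$ copies, and uniqueness of received messages because omission-faulty nodes cannot forge). No gaps; the explicit justification of the uniqueness clause via the definition of \om{p} is exactly the right observation.
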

\begin{proof}
Analogous to the one of Lemma~\ref{lemma:sim_byz}, with the difference that faulty nodes may only omit sending messages and thus a single correct copy per node is sufficient.
\end{proof}

\paragraph{Resilience of the Reinforcement}
We now examine how large the probability $p$ can be for the precondition of Lemma~\ref{lemma:sim_byz} to be satisfied a.a.s.
\begin{theorem}
The above construction is a valid strong reinforcement for the fault model \om{p} if $p \in o(n^{-1/(f+1)})$. If $G$ contains $\Omega(n)$ nodes with non-zero outdegree, $p\in \omega(n^{-1/(f+1)})$ implies that the reinforcement is not valid.
\end{theorem}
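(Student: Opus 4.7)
The plan is to mimic the proof of Theorem~\ref{thm:strong_byz}, but exploit that under \om{p} a node's simulation at step $r$ only fails if \emph{all} $\ell=f+1$ of its copies lie in $F'$. By Lemma~\ref{lemma:sim_om}, it suffices to show that a.a.s.\ for every $v\in V$ we have $|\{v_i \in F'\}| \leq f$, which is the complement of the event that all $f+1$ copies of some node are faulty.

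For the first (positive) claim, I would fix $v\in V$ and observe that, by independence of failures across copies, the probability that all $\ell = f+1$ copies of $v$ lie in $F'$ is exactly $p^{f+1}$. A union bound over the $n$ nodes then gives that the precondition of Lemma~\ref{lemma:sim_om} fails with probability at most $n p^{f+1}$. Substituting $p\in o(n^{-1/(f+1)})$ yields $n p^{f+1}\in o(1)$, so the precondition holds a.a.s., and validity of the reinforcement follows.

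For the second (negative) claim, I would use that the events $E_v\triangleq \{\text{all copies of $v$ are in $F'$}\}$ are \emph{mutually independent} across $v\in V$ (since the copies of distinct nodes are disjoint), and each has probability exactly $p^{f+1}$. Restricting to the $\Omega(n)$ nodes with non-zero outdegree, the probability that none of the $E_v$ occurs is $(1-p^{f+1})^{\Omega(n)}$. When $p\in \omega(n^{-1/(f+1)})$, we have $p^{f+1}\in \omega(1/n)$, and a standard $1-x\leq e^{-x}$ estimate shows $(1-p^{f+1})^{\Omega(n)}=o(1)$; hence a.a.s.\ some node $v$ with non-zero outdegree has all copies faulty.

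The last step is to translate this bad event into a failure of the reinforcement: pick an algorithm $A$ and input in which $v$ transmits a non-trivial message in some round over an outgoing edge $(v,w)\in E$. If every $v_i$ omits its messages, then for each non-faulty $w_j$ no neighbor in the copy-set of $v$ sends anything in that round, so $w_j$ cannot reconstruct the message of $v$ and the strong simulation breaks. This shows the reinforcement is not valid. I do not expect any real obstacle: the argument is essentially a cleaner version of the Byzantine analysis, since the threshold for a node-level failure is $\ell$ (rather than $f+1$ out of $2f+1$), so no binomial tail sum is needed and the constants $\binom{2f+1}{f+1}(2e)^f$ from Theorem~\ref{thm:strong_byz} collapse to $1$.
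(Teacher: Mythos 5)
Your proposal is correct and matches the paper's own proof: both directions use exactly the same events (all $\ell=f+1$ copies of a node faulty, probability $p^{f+1}$), a union bound for the positive claim, and independence across the $\Omega(n)$ out-degree-positive nodes plus the omitted-message argument for the negative claim. The paper's version is merely terser; there is no substantive difference.
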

\begin{proof}
By Lemma~\ref{lemma:sim_om}, $A'$ strongly simulates $A$ if for each $v\in V$, $|\{v_i\in F'\}|\leq f = \ell -1$. For $v\in V$,
\begin{equation*}
\Prr{\{v_i\,|\,i\in [\ell]\}\cap F'=\ell} = p^{f+1}.
\end{equation*}
By a union bound, $A'$ thus simulates $A$ with probability $1-o(1)$ if $p\in o(n^{-1/(f+1)})$.

Conversely, if there are $\Omega(n)$ nodes with non-zero outdegree and $p\in \omega(n^{-1/(f+1)})$, with probability $1-o(1)$ all copies of at least one such node $v$ are faulty. If $v$ sends a message under $A$, but all corresponding messages of copies of $v$ are not sent, the simulation fails. This shows that in this case the reinforcement is not valid.
\end{proof}
\paragraph{Efficiency of the Reinforcement}
For $f\in \NN$, we have that $\nu = \ell = f+1$ and $\eta = \ell^2 = f^2 + 2f + 1$, while we can sustain $p\in o(n^{-1/(f+1)})$.
In the special case of $f=1$, we improve from $p\in o(1/n)$ for the original network to $p\in o(1/\sqrt{n})$ by doubling the number of nodes and quadrupling the number of edges.

\section{More Efficient Reinforcement}\label{sec:eff}
In this section, we reduce the overhead in terms of edges at the expense of obtaining reinforcements that are not strong. We stress that the obtained trade-off between redundancy ($\nu$ and $\eta$) and the sustainable probability of faults $p$ is asymptotically optimal: as we require to preserve arbitrary routing schemes in a blackbox fashion, we need sufficient redundancy on the link level to directly simulate communication. From this observation, both for \om{p} and \byz{p} we can readily derive trivial lower bounds on redundancy that match the constructions below up to lower-order terms.

\subsection{A Toy Example}
Before we give the construction, we give some intuition on how we can reduce the number of required edges. Consider the following simple case. $G$ is a single path of $n$ vertices $(v_1,\ldots, v_n)$, and the schedule requires that in round $i$, a message is sent from $v_i$ to $v_{i+1}$. We would like to use a ``budget'' of only $n$ additional vertices and an additional $(1+\eps) m=(1+\eps) (n-1)$ links, assuming the fault model \om{p}. One approach is to duplicate the path and extend the routing scheme accordingly. We already used our entire budget apart from $\eps m$ links! This reinforcement is valid as long as one of the paths succeeds in delivering the message all the way.
The probability that one of the paths ``survives'' is
$1-(1-(1-p)^n)^2 \leq 1-(1-e^{-pn})^2 \leq e^{-2pn}$,
where we used that $1-x\leq e^{-x}$ for any $x\in \mathbb{R}$.
Hence, for any $p = \omega(1/n)$, the survival probability is $o(1)$. In contrast, the strong reinforcement with $\ell=2$ (i.e., $f=1$) given in \S~\ref{sec:strong_om} sustains any $p\in o(1/\sqrt{n})$ with probability $1-o(1)$; however, while it adds $n$ nodes only, it requires $3m$ additional edges.

We need to add some additional edges to avoid that the likelihood of the message reaching its destination drops too quickly. To this end, we use the remaining $\varepsilon m$ edges to ``cross'' between the two paths every $h\triangleq 2/\varepsilon$ hops (assume $h$ is an integer), cf.~Figure~\ref{fig:toy_and_grid}.
 This splits the path into segments of $h$ nodes each. As long as, for each such segment, in one of its copies all nodes survive, the message is delivered. For a given segment, this occurs with probability $1-(1-(1-p)^h)^2\geq 1-(ph)^2$. Overall, the message is thus delivered with probability at least $(1-(ph)^2)^{n/h}\geq 1-nhp^2$.
As for any constant $\varepsilon$, $h$ is a constant, this means that the message is delivered a.a.s.\ granted that $p\in o(1/\sqrt{n})$!

\begin{rem}
The reader is cautioned to not conclude from this example that random sampling of edges will be sufficient for our purposes in more involved graphs. Since we want to handle arbitrary routing schemes, we have no control over the number of utilized routing paths. As the latter is exponential in $n$, the probability that a fixed path is not ``broken'' by $F$ would have to be exponentially small in $n$. Moreover, trying to leverage Lov\'asz Local Lemma for a deterministic result runs into the problem that there is no (reasonable) bound on the number of routing paths that pass through a single node, i.e., the relevant random variables (i.e., whether a path ``survives'') exhibit lots of dependencies.
\end{rem}

\begin{figure}[H]
      \centering
        \includegraphics[width=\columnwidth,trim=0 00bp 0 000bp,clip]{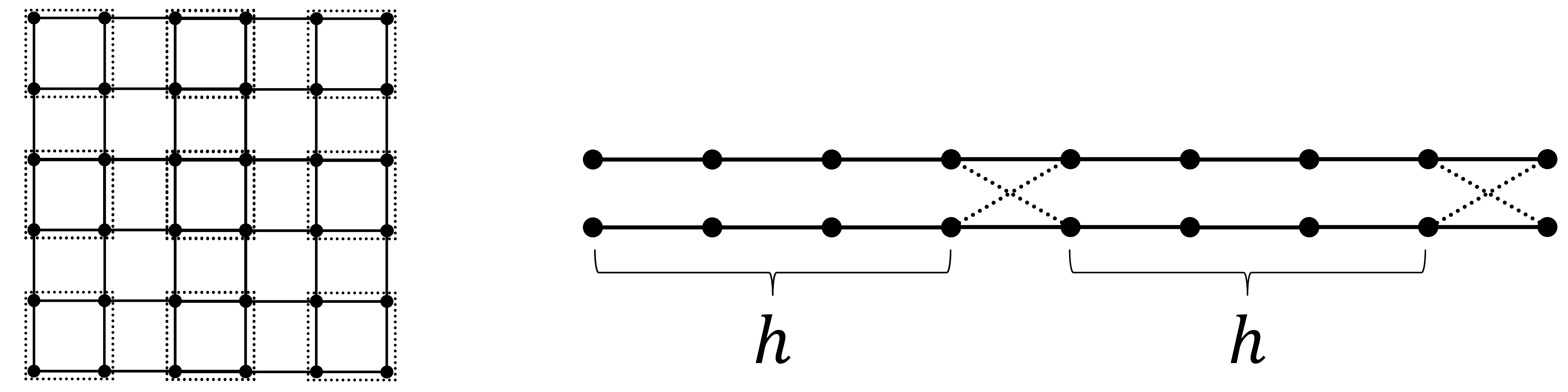}
      \caption{On the right: our toy example with $n=9$, $m=8$, $\varepsilon =1/2$, and $h=4$. The number of additional edges is $(1+\varepsilon)m$, instead of $3m$ as in the strong reinforcement construction. On the left: a $6$-ary $2$-dimensional hypercube. The subdivision of the node set into $2$-ary $2$-dimensional subcubes is illustrated by dotted lines.\label{fig:toy_and_grid}}
\end{figure}

\subsection{Partitioning the Graph}
To apply the above strategy to other graphs, we must take into account that there can be multiple intertwined routing paths. However, the key point in the above example was not that we had path segments, but rather that we partitioned the nodes into constant-size regions and added few edges inside these regions, while fully connecting the copies of nodes at the boundary of the regions.

In general, it is not possible to partition the nodes into constant-sized subsets such that only a very small fraction of the edges connects different subsets; any graph with good expansion is a counter-example. Fortunately, many network topologies used in practice are good candidates for our approach. In the following, we will discuss grid networks and minor free graphs, and show how to apply the above strategy in each of these families of graphs.

\paragraph{Grid Networks}
We can generalize the above strategy to hypercubes of dimension $d>1$.

\begin{definition}[Hypercube Networks]
A $q$-ary $d$-dimensional hypercube has node set $[q]^d$ and two nodes are adjacent if they agree on all but one index $i\in [d]$, for which $|v_i-w_i|=1$.
\end{definition}

\begin{lemma}\label{lemma:hypercube}
For any $h,d\in \NN$, assume that $h$ divides $q\in \NN$ and set $\varepsilon=1/h$. Then the $q$-ary $d$-dimensional hypercube can be partitioned into $(q/h)^d$ regions of $h^d$ nodes such that at most an $\varepsilon$-fraction of the edges connects nodes from different regions.
\end{lemma}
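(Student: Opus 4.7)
The plan is to exhibit an explicit partition, namely the one suggested by Figure~\ref{fig:toy_and_grid}: group coordinate values into contiguous blocks of size $h$ and take products across dimensions. Concretely, for $v\in [q]^d$, assign $v$ to the region indexed by $(\lceil v_1/h\rceil,\ldots,\lceil v_d/h\rceil)\in [q/h]^d$. Since $h\mid q$, there are exactly $(q/h)^d$ such regions, each consisting of the $h^d$ nodes whose $i$-th coordinate lies in a fixed interval of the form $\{(k_i-1)h+1,\ldots,k_ih\}$. This immediately gives the claimed number and size of regions.

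Next I would count edges. By the definition of the hypercube, edges split according to the unique coordinate $i$ in which the two endpoints differ, and within dimension $i$ there are $q-1$ possible values of the pair $(v_i,w_i)$ (the consecutive pairs in $\{1,\ldots,q\}$), together with $q^{d-1}$ choices for the remaining coordinates. Hence the total number of edges is $d(q-1)q^{d-1}$.

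An edge between $v$ and $w$ that differs in coordinate $i$ is \emph{cross-region} iff $\lceil v_i/h\rceil\neq \lceil w_i/h\rceil$, which, since $w_i=v_i\pm 1$, happens exactly when the smaller endpoint equals $kh$ for some $k\in\{1,\ldots,q/h-1\}$. Thus in dimension $i$ there are $q/h-1$ boundary pair positions, yielding $d(q/h-1)q^{d-1}$ cross-region edges in total. The fraction of cross-region edges is therefore
\begin{equation*}
\frac{d(q/h-1)q^{d-1}}{d(q-1)q^{d-1}}=\frac{q/h-1}{q-1}\leq \frac{1}{h}=\varepsilon,
\end{equation*}
where the inequality is equivalent to $q-h\leq q-1$, i.e., $h\geq 1$, which holds. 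This is the claim.

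There is no real obstacle: the argument is pure counting once the natural block partition is written down. The only mild subtlety is keeping the boundary bookkeeping correct (there are $q/h-1$ boundaries per dimension, not $q/h$, because the two extreme blocks contribute a boundary only on their interior side); this is what makes the inequality $\varepsilon$-tight only in the degenerate case $h=1$ and strict otherwise.
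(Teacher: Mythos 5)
Your proof is correct and takes essentially the same approach as the paper: the identical block partition into $h$-ary $d$-dimensional subcubes, followed by a counting argument for the crossing edges. The only difference is bookkeeping: you count crossing and total edges exactly and globally (getting the slightly sharper ratio $(q/h-1)/(q-1)$), whereas the paper amortizes per subcube by charging one face per dimension to each subcube, which over-counts at the grid boundary and hence only yields the bound $1/h$ directly.
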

\begin{proof}
We subdivide the node set into $h$-ary $d$-dimensional subcubes; for an example of the subdivision of the node set of a $6$-ary $2$-dimensional hypercube into $2$-ary $2$-dimensional subcubes see Figure~\ref{fig:toy_and_grid}. There are $(q/h)^d$ such subcubes. The edges crossing the regions are those connecting the faces of adjacent subcubes. For each subcube, we attribute for each dimension one face to each subcube (the opposite face being accounted for by the adjacent subcube in that direction). Thus, we have at most $dh^{d-1}$ crossing edges per subcube. The total number of edges per subcube are these crossing edges plus the $d(h-1)h^{d-1}$ edges within the subcube. Overall, the fraction of crossedges is thus at most $1/(1+(h-1))=1/h$, as claimed.
\end{proof}

Note that the above result and proof extend to tori, which also include the ``wrap-around'' edges connecting the first and last nodes in any given dimension.

\paragraph{Minor free Graphs}
Another general class of graphs that can be partitioned in a similar fashion are minor free bounded-degree graphs.

\begin{definition}[$H$-Minor free Graphs]
For a fixed graph $H$, $H$ is a minor of $G$ if $H$ is isomorphic to a graph that can be obtained by zero or more
edge contractions on a subgraph of $G$. We say that a graph G is $H$-minor free if $H$ is not a minor of $G$.
\end{definition}
For any such graph, we can apply a corollary from \cite[Coro.~2]{LeviR15}, which is based on~\cite{alon1990separator}, to construct a suitable partition.

\begin{theorem}[\cite{LeviR15}]
Let $H$ be a fixed graph. There is a constant $c(H) > 1$ such that for every $\eps \in (0, 1]$ and
every $H$-minor free graph $G = (V, E)$ with degree bounded by $\Delta$ a partition $R_1,\ldots,R_k\subseteq V$ with the following properties can be found in time $O(|V|^{3/2})$:
\begin{enumerate}[(i)]
    \item $\forall i : |R_i|\leq \frac{c(H)\Delta^2}{\eps^2}$,
    \item $\forall i$ the subgraph induced by $R_i$ in $G$ is connected.
    \item $|\{(u,v) \mid u \in R_i, v \in R_j, i\neq j\}|\leq \eps \cdot |V|$.
\end{enumerate}
\end{theorem}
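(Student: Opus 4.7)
The plan is to obtain the partition by recursively applying the separator theorem of Alon, Seymour and Thomas~\cite{alon1990separator}, which states that every $H$-minor free graph on $n$ vertices admits a vertex separator of size $O_H(\sqrt{n})$ whose removal leaves each connected component with at most $(2/3)n$ vertices. First I would recurse on each component, halting as soon as the current component has at most $s := c(H)\Delta^2/\eps^2$ vertices. Let $B$ denote the union of all separators produced over the course of the recursion, and let $\mathcal{L}$ denote the resulting collection of ``leaf'' subgraphs; the leaves automatically satisfy property~(i), and subgraphs of $H$-minor free graphs being $H$-minor free ensures that the separator theorem can legitimately be invoked at every level of the recursion with the same constant.

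The next step is to bound $|B|$ by unrolling the recurrence $T(n)\leq T(\alpha n)+T((1-\alpha)n)+O_H(\sqrt{n})$, terminated once the argument drops to $s$. A standard telescoping argument yields $|B|=O_H(|V|/\sqrt{s})$. Choosing $c(H)$ sufficiently large compared to the hidden constant in the Alon--Seymour--Thomas bound then ensures $|B|\leq \eps|V|/\Delta$, so at most $\Delta|B|\leq \eps|V|$ edges have an endpoint in $B$. Since in $G-B$ no leaf lies in the same connected component as another leaf, every inter-region edge of the eventual partition touches $B$, giving property~(iii).

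For property~(ii) I would iteratively re-absorb each $b\in B$ into a neighboring leaf region, choosing a neighbor so that the resulting region remains connected; this is possible because $b$ was part of a separator in an ambient connected subgraph and therefore has at least one neighbor in a leaf (or, transitively through other re-absorbed boundary vertices, reaches one). The size of any leaf grows by at most the number of boundary vertices attached to it, and since each $b$ has at most $\Delta$ neighbors and $|B|$ is small, this blow-up is absorbed by enlarging $c(H)$ by a constant factor. The runtime follows because the Alon--Seymour--Thomas separator is computable in $O(n^{3/2})$ time and the subproblem sizes decay geometrically, so the total work is $O(|V|^{3/2})$.

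The main obstacle is the joint constant tracking across $\Delta$, $\eps$, and $c(H)$: the absorption step must not break property~(i), and the separator-size bound $O_H(\sqrt{n})$ must translate cleanly into the bound $|B|/|V|\leq \eps/\Delta$ required to produce property~(iii) after multiplying by $\Delta$. A secondary subtlety is that leaves may lie at the bottom of recursion paths of differing depth, so the absorption ordering must be chosen (for instance, processing boundary vertices in reverse order of separator-creation) to guarantee that every $b$ finds a leaf neighbor at the time it is absorbed; once this ordering is fixed, the remaining analysis is purely bookkeeping.
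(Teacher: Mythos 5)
This statement is not proved in the paper at all: it is imported verbatim as Corollary~2 of~\cite{LeviR15}, which in turn rests on the separator theorem of~\cite{alon1990separator}. Your reconstruction follows exactly the route that the cited source takes -- recursive application of the $O_H(\sqrt{n})$-separator until pieces have size at most $s=c(H)\Delta^2/\eps^2$, the Cauchy--Schwarz/geometric-sum bound $|B|=O_H(|V|/\sqrt{s})$ on the union of separators, and the observation that every inter-region edge must touch $B$ because distinct leaves lie in distinct components of $G-B$. The constant tracking ($|B|\le \eps|V|/\Delta$ after enlarging $c(H)$, hence at most $\Delta|B|\le\eps|V|$ cut edges) and the $O(|V|^{3/2})$ runtime via geometrically decaying subproblem costs are all sound.

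The one step that does not hold as written is the re-absorption of $B$ into leaf regions. You bound the growth of a leaf by ``the number of boundary vertices attached to it,'' but nothing caps that number: a chain of separator vertices whose only non-$B$ neighbor sits in a single leaf forces all of them into that leaf, so a region can in principle grow by up to $|B|$ vertices, not by a factor absorbed into $c(H)$. This would break property~(i). The fix is immediate and makes the absorption machinery (and its ordering subtleties) unnecessary: declare each $b\in B$ a singleton region. Singletons are trivially connected and of size $1\le s$, so properties~(i) and~(ii) hold, and property~(iii) is unaffected because the inter-region edges are still exactly the edges incident to $B$ (there are none between distinct leaves), of which there are at most $\Delta|B|\le\eps|V|$. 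One further small point: the separator theorem should be applied to each connected component separately (or you should note that a disconnected piece is handled component-wise), since otherwise a ``separator vertex'' need not have any neighbor at all, which is what your transitive-reachability argument for absorption implicitly relied on.
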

\begin{rem}
Grids and tori of dimension $d>2$ are not minor-free.
\end{rem}
We note that this construction is not satisfactory, as it involves large constants. It demonstrates that a large class of graphs is amenable to the suggested approach, but it is advisable to search for optimized constructions for more specialized graph families before applying the scheme.

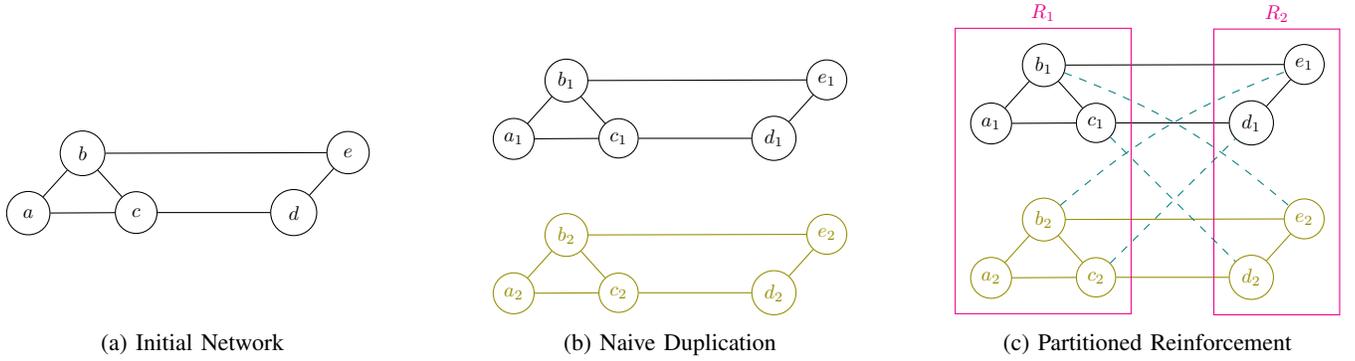
\begin{figure*}
\begin{minipage}[b]{0.3\textwidth}
\begin{tikzpicture}[main/.style = {draw, circle}, scale=0.75, transform shape]
\node[main] (1) {$\ a\ $};
\node[main] (2) [above right=0.5cm and 0.4cm of 1] {$\ b\ $};
\node[main] (3) [below right=0.5cm and 0.4cm of 2] {$\ c\ $};
\node[main] (4) [right=2cm of 3] {$\ d\ $};
\node[main] (5) [above right=0.5cm and 0.4cm of 4] {$\ e\ $};
\draw (1) -- (2);
\draw (1) -- (3);
\draw (2) -- (3);
\draw (3) -- (4);
\draw (4) -- (5);
\draw (2) -- (5);
\end{tikzpicture}
\vspace*{30pt}
\centering\subcaption{Initial Network}
\end{minipage}\hfill
\begin{minipage}[b]{0.3\textwidth}
\begin{tikzpicture}[main/.style = {draw, circle}, scale=0.75, transform shape]
\node[main] (1) {$a_1$};
\node[main] (2) [above right=0.5cm and 0.4cm of 1] {$b_1$};
\node[main] (3) [below right=0.5cm and 0.4cm of 2] {$c_1$};
\node[main] (4) [right=2cm of 3] {$d_1$};
\node[main] (5) [above right=0.5cm and 0.4cm of 4] {$e_1$};
\node[main] (6) [olive, below=2cm of 1] {$a_2$};
\node[main] (7) [olive, above right=0.5cm and 0.4cm of 6] {$b_2$};
\node[main] (8) [olive, below right=0.5cm and 0.4cm of 7] {$c_2$};
\node[main] (9) [olive, right=2cm of 8] {$d_2$};
\node[main] (10) [olive, above right=0.5cm and 0.4cm of 9] {$e_2$};
\draw (1) -- (2);
\draw (1) -- (3);
\draw (2) -- (3);
\draw (3) -- (4);
\draw (4) -- (5);
\draw (2) -- (5);
\draw [olive] (6) -- (7);
\draw [olive] (6) -- (8);
\draw [olive] (7) -- (8);
\draw [olive] (8) -- (9);
\draw [olive] (9) -- (10);
\draw [olive] (7) -- (10);
\end{tikzpicture}
\centering\subcaption{Naive Duplication}
\end{minipage}\hfill
\begin{minipage}[b]{0.3\textwidth}
\begin{tikzpicture}[main/.style = {draw, circle}, scale=0.75, transform shape]
\node[main] (1) {$a_1$};
\node[main] (2) [above right=0.5cm and 0.4cm of 1] {$b_1$};
\node[main] (3) [below right=0.5cm and 0.4cm of 2] {$c_1$};
\node[main] (4) [right=2cm of 3] {$d_1$};
\node[main] (5) [above right=0.5cm and 0.4cm of 4] {$e_1$};
\node[main] (6) [olive, below=2cm of 1] {$a_2$};
\node[main] (7) [olive, above right=0.5cm and 0.4cm of 6] {$b_2$};
\node[main] (8) [olive, below right=0.5cm and 0.4cm of 7] {$c_2$};
\node[main] (9) [olive, right=2cm of 8] {$d_2$};
\node[main] (10) [olive, above right=0.5cm and 0.4cm of 9] {$e_2$};
\draw (1) -- (2);
\draw (1) -- (3);
\draw (2) -- (3);
\draw (3) -- (4);
\draw (4) -- (5);
\draw (2) -- (5);
\draw [olive] (6) -- (7);
\draw [olive] (6) -- (8);
\draw [olive] (7) -- (8);
\draw [olive] (8) -- (9);
\draw [olive] (9) -- (10);
\draw [olive] (7) -- (10);
\draw [teal, dashed] (3) -- (9);
\draw [teal, dashed] (8) -- (4);
\draw [teal, dashed] (2) to [bend left=10] (10);
\draw [teal, dashed] (7) to [bend left=10] (5);
\node[draw, magenta, inner sep=2mm, label={[text=magenta]above:$R_1$}, fit=(2) (6) (3) (1)] {};
\node[draw, magenta, inner sep=2mm, label={[text=magenta]above:$R_2$}, fit=(5) (9) (5) (4)] {};
\end{tikzpicture}
\centering\subcaption{Partitioned Reinforcement}
\end{minipage}
\caption{Comparison between a sample network, its naive duplication, and its reinforcement using two replications ($\ell=2$) and two partitions ($k=2$). The node overhead, edge overhead, and maximum node fault probability tolerance ($p$) under omission fault for 99\% network reliability for these three networks are (a) 1, 1, $\sim$0.002; (b) 2, 2, $\sim$0.02; and (c) 2, 2.67, $\sim$0.028, respectively. Note that both the naive duplication and the reinforced networks are guaranteed to be robust to one faulty node. However, the latter can handle some additional cases, like $c_1$ and $d_2$ nodes being faulty.}
\label{fig:net_compare}
\end{figure*}

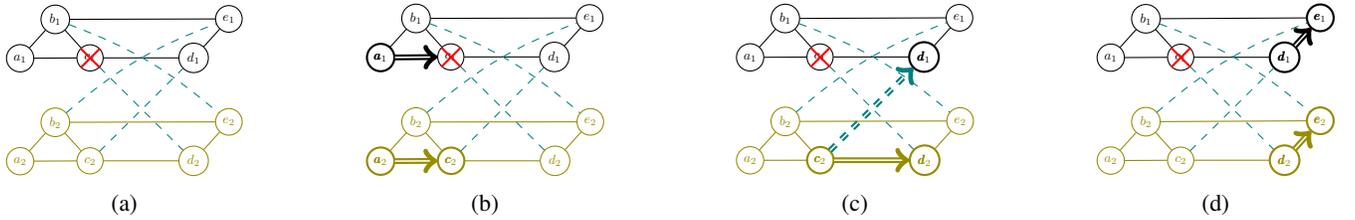
\begin{figure*}
\begin{minipage}[b]{0.2\textwidth}
\begin{tikzpicture}[main/.style = {draw, circle}, scale=0.5, transform shape]
\node[main] (1) {$a_1$};
\node[main] (2) [above right=0.5cm and 0.4cm of 1] {$b_1$};
\node[main] (3) [below right=0.5cm and 0.4cm of 2] {$c_1$};
\node[main, thick, cross out, outer sep=4pt] (_3) [red, below right=0.5cm and 0.4cm of 2] {};
\node[main] (4) [right=2cm of 3] {$d_1$};
\node[main] (5) [above right=0.5cm and 0.4cm of 4] {$e_1$};
\node[main] (6) [olive, below=2cm of 1] {$a_2$};
\node[main] (7) [olive, above right=0.5cm and 0.4cm of 6] {$b_2$};
\node[main] (8) [olive, below right=0.5cm and 0.4cm of 7] {$c_2$};
\node[main] (9) [olive, right=2cm of 8] {$d_2$};
\node[main] (10) [olive, above right=0.5cm and 0.4cm of 9] {$e_2$};
\draw (1) -- (2);
\draw (1) -- (3);
\draw (2) -- (3);
\draw (3) -- (4);
\draw (4) -- (5);
\draw (2) -- (5);
\draw [olive] (6) -- (7);
\draw [olive] (6) -- (8);
\draw [olive] (7) -- (8);
\draw [olive] (8) -- (9);
\draw [olive] (9) -- (10);
\draw [olive] (7) -- (10);
\draw [teal, dashed] (3) -- (9);
\draw [teal, dashed] (8) -- (4);
\draw [teal, dashed] (2) to [bend left=10] (10);
\draw [teal, dashed] (7) to [bend left=10] (5);
\end{tikzpicture}
\centering\subcaption{}
\end{minipage}\hfill
\begin{minipage}[b]{0.2\textwidth}
\begin{tikzpicture}[main/.style = {draw, circle}, scale=0.5, transform shape]
\node[main, thick] (1) {$\pmb a_1$};
\node[main] (2) [above right=0.5cm and 0.4cm of 1] {$b_1$};
\node[main] (3) [below right=0.5cm and 0.4cm of 2] {$c_1$};
\node[main, thick, cross out, outer sep=4pt] (_3) [red, below right=0.5cm and 0.4cm of 2] {};
\node[main] (4) [right=2cm of 3] {$d_1$};
\node[main] (5) [above right=0.5cm and 0.4cm of 4] {$e_1$};
\node[main, thick] (6) [olive, below=2cm of 1] {$\pmb a_2$};
\node[main] (7) [olive, above right=0.5cm and 0.4cm of 6] {$b_2$};
\node[main, thick] (8) [olive, below right=0.5cm and 0.4cm of 7] {$\pmb c_2$};
\node[main] (9) [olive, right=2cm of 8] {$d_2$};
\node[main] (10) [olive, above right=0.5cm and 0.4cm of 9] {$e_2$};
\draw (1) -- (2);
\draw [thick, double, ->] (1) -- (3);
\draw (2) -- (3);
\draw (3) -- (4);
\draw (4) -- (5);
\draw (2) -- (5);
\draw [olive] (6) -- (7);
\draw [olive, thick, double, ->] (6) -- (8);
\draw [olive] (7) -- (8);
\draw [olive] (8) -- (9);
\draw [olive] (9) -- (10);
\draw [olive] (7) -- (10);
\draw [teal, dashed] (3) -- (9);
\draw [teal, dashed] (8) -- (4);
\draw [teal, dashed] (2) to [bend left=10] (10);
\draw [teal, dashed] (7) to [bend left=10] (5);
\end{tikzpicture}
\centering\subcaption{}
\end{minipage} \hfill
\begin{minipage}[b]{0.2\textwidth}
\begin{tikzpicture}[main/.style = {draw, circle}, scale=0.5, transform shape]
\node[main] (1) {$a_1$};
\node[main] (2) [above right=0.5cm and 0.4cm of 1] {$b_1$};
\node[main] (3) [below right=0.5cm and 0.4cm of 2] {$c_1$};
\node[main, thick, cross out, outer sep=4pt] (_3) [red, below right=0.5cm and 0.4cm of 2] {};
\node[main, thick] (4) [right=2cm of 3] {$\pmb d_1$};
\node[main] (5) [above right=0.5cm and 0.4cm of 4] {$e_1$};
\node[main] (6) [olive, below=2cm of 1] {$a_2$};
\node[main] (7) [olive, above right=0.5cm and 0.4cm of 6] {$b_2$};
\node[main, thick] (8) [olive, below right=0.5cm and 0.4cm of 7] {$\pmb c_2$};
\node[main, thick] (9) [olive, right=2cm of 8] {$\pmb d_2$};
\node[main] (10) [olive, above right=0.5cm and 0.4cm of 9] {$e_2$};
\draw (1) -- (2);
\draw (1) -- (3);
\draw (2) -- (3);
\draw (3) -- (4);
\draw (4) -- (5);
\draw (2) -- (5);
\draw [olive] (6) -- (7);
\draw [olive] (6) -- (8);
\draw [olive] (7) -- (8);
\draw [olive, thick, double, ->] (8) -- (9);
\draw [olive] (9) -- (10);
\draw [olive] (7) -- (10);
\draw [teal, dashed] (3) -- (9);
\draw [teal, dashed, thick, double, ->] (8) -- (4);
\draw [teal, dashed] (2) to [bend left=10] (10);
\draw [teal, dashed] (7) to [bend left=10] (5);
\end{tikzpicture}
\centering\subcaption{}
\end{minipage}\hfill
\begin{minipage}[b]{0.2\textwidth}
\begin{tikzpicture}[main/.style = {draw, circle}, scale=0.5, transform shape]
\node[main] (1) {$a_1$};
\node[main] (2) [above right=0.5cm and 0.4cm of 1] {$b_1$};
\node[main] (3) [below right=0.5cm and 0.4cm of 2] {$c_1$};
\node[main, thick, cross out, outer sep=4pt] (_3) [red, below right=0.5cm and 0.4cm of 2] {};
\node[main, thick] (4) [right=2cm of 3] {$\pmb d_1$};
\node[main, thick] (5) [above right=0.5cm and 0.4cm of 4] {$\pmb e_1$};
\node[main] (6) [olive, below=2cm of 1] {$a_2$};
\node[main] (7) [olive, above right=0.5cm and 0.4cm of 6] {$b_2$};
\node[main] (8) [olive, below right=0.5cm and 0.4cm of 7] {$c_2$};
\node[main, thick] (9) [olive, right=2cm of 8] {$\pmb d_2$};
\node[main, thick] (10) [olive, above right=0.5cm and 0.4cm of 9] {$\pmb e_2$};
\draw (1) -- (2);
\draw (1) -- (3);
\draw (2) -- (3);
\draw (3) -- (4);
\draw [thick, double, ->] (4) -- (5);
\draw (2) -- (5);
\draw [olive] (6) -- (7);
\draw [olive] (6) -- (8);
\draw [olive] (7) -- (8);
\draw [olive] (8) -- (9);
\draw [olive, thick, double, ->] (9) -- (10);
\draw [olive] (7) -- (10);
\draw [teal, dashed] (3) -- (9);
\draw [teal, dashed] (8) -- (4);
\draw [teal, dashed] (2) to [bend left=10] (10);
\draw [teal, dashed] (7) to [bend left=10] (5);
\end{tikzpicture}
\centering\subcaption{}
\end{minipage}
\caption{An illustration of a sample routing in the partitioned reinforced network from Figure \ref{fig:net_compare}. (a) Consider the case when node $c_1$ is faulty, under the omission fault assumption, and a message was to be sent along the $a-c-d-e$ route in the initial network. The goal is to track this message in the reinforced network. (b) The message is sent from $a_1$ to $c_1$, and from $a_2$ to $c_2$. (c) Node $c_1$ is not able to send the message to the next node. On the other hand, node $c_2$ sends the message to both $d_1$ and $d_2$. (d) $d_1$ and $d_2$ both receive a message, and therefore, send it to $e_1$ and $e_2$, respectively. At this point, the routing is successful, as only one of $e_1$ or $e_2$ receiving the message is sufficient. While this is an example of routing a single message, the reinforced network is able to operate for any algorithm that is runnable on the initial network. More detail can be found in Section \ref{subsec:part_reinforce}.}
\label{fig:sample_routing}
\end{figure*}

\subsection{Reinforcement}
\label{subsec:part_reinforce}
Equipped with a suitable partition of the original graph $G=(V,E)$ into disjoint regions $R_1,\ldots,R_k\subseteq V$, we reinforce as follows.
As before, we set $V'\triangleq V\times [\ell]$, denote $v_i\triangleq (v,i)$, define $P(v_i)\triangleq v$, and set $\ell\triangleq f+1$. However, the edge set of $G'$ differs. For $e=(v,w)\in E$,
\begin{align*}
E_e'\triangleq
 \begin{cases}
\{(v_i,w_i)\,|\,i\in [\ell]\} & \mbox{if $\exists k'\in [k]: v,w\in R_{k'}$}\\
\{(v_i,w_j)\,|\,i,j\in [\ell]\} & \mbox{else,}
\end{cases}
\end{align*}
and we set $E'\triangleq \bigcup_{e\in E} E_e'$.

\paragraph{Simulation under \texorpdfstring{\om{p}}{Om(p)}}
Consider $v\in V$. We want to maintain the invariant that in each round, \emph{some} $v_i$ has a copy of the state of $v$ in $A$. To this end, $v'\in V'$
\begin{compactenum}[\bfseries (1)]
\item initializes local copies of all state variables of $v$ as in $A$ and sets $\know_{v'}=\true$;
\item sends on each link $(v',w')\in E'$ in each round
\begin{compactitem}
\item message $M$, if $P(v')$ would send $M$ via $(P(v'),P(w'))$ when executing $A$ and $\know_{v'}=\true$,
\item a special symbol $\bot$ if $\know_{v'}=\true$, but $v$ would not send a message via $(P(v'),P(w'))$ according to $A$, or
\item no message if $\know_{v'}=\false$;
\end{compactitem}
\item if, in a given round, $\know_{v'}=\true$ and $v'$ receives for each neighbor $w$ of $P(v')$ a message from some $w_j\in V'$, it updates the local copy of the state of $v$ in $A$ as if $P(v')$ received this message (interpreting $\bot$ as no message); and
\item if this is not the case, $v'$ sets $\know_{v'}=\false$.
\end{compactenum}
We claim that as long as $\know_{v'}=\true$ at $v'$, $v'$ has indeed a copy of the state of $P(v')$ in the corresponding execution of $A$; therefore, it can send the right messages and update its state variables correctly.

\begin{lemma}\label{applemma:weak_om}
Suppose that for each $k'\in [k]$, there is some $i\in [\ell]$ so that $\{v_i\,|\,v\in R_{k'}\}\cap F'=\emptyset$. Then $A'$ simulates $A$.
\end{lemma}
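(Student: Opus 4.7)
The plan is to prove this by induction on the round number, maintaining the invariant that for every region $R_{k'}$, if $i\in[\ell]$ is a ``good'' layer for $R_{k'}$ in the sense that no $v_i$ with $v\in R_{k'}$ is in $F'$, then for every $v\in R_{k'}$ we have $\know_{v_i}=\true$ throughout the execution, and the local copy of the state of $v$ at $v_i$ matches what $v$ computes in $A$. By hypothesis such an $i$ exists for each $k'$, so once the invariant is established the simulation is immediate: every $v\in V$ has at least one copy $v_i$ with the correct state of $v$, which is precisely what the definition of a simulation under \om{p} requires.

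For the base case $r=0$, the invariant holds by Step (1) of the protocol: every node initializes $\know=\true$ together with a correct copy of $v$'s initial state. For the inductive step, fix a good pair $(k',i)$, a node $v\in R_{k'}$, and a neighbor $w$ of $v$ in $G$. I would split on the location of $w$. If $w\in R_{k'}$, the edge rule puts only the parallel edge $(v_i,w_i)$ into $E_{(v,w)}'$ at layer $i$; since $w_i$ is in the same good pair, it is non-faulty and, by the induction hypothesis, has $\know_{w_i}=\true$ with the correct state of $w$, so by Step (2) it transmits exactly the message that $w$ would send in $A$, or $\bot$ if $w$ sends nothing. If $w\in R_{k''}$ for some $k''\neq k'$, the edge rule puts the full bipartite gadget between the copies of $v$ and $w$ into $E_{(v,w)}'$; picking the good layer $j$ of $R_{k''}$ guaranteed by the assumption, $w_j$ is non-faulty and satisfies the invariant by induction, so along edge $(w_j,v_i)$ it transmits the same correct (possibly $\bot$) message.

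Hence in round $r+1$ the node $v_i$ receives, for every neighbor $w$ of $v$, at least one message from some copy $w_\cdot$ that equals the message $v$ would receive from $w$ in $A$. Therefore the condition in Step (3) is met at $v_i$, so $\know_{v_i}$ remains $\true$, and Step (3) updates $v_i$'s local copy exactly as $A$ updates the state of $v$ in round $r+1$. This closes the induction and proves the lemma.

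The only delicate point, and the main thing to be careful about, is the interaction of the two edge rules: the intra-region rule only links layer $i$ to layer $i$, so the induction has to lock on to one specific good layer per region and use the fact that good pairs ``stitch'' across regions through the complete bipartite inter-region gadget, which is why choosing (possibly different) good layers $i$ and $j$ on the two sides of a cross-region edge still yields a valid incoming message at $v_i$. A subtlety worth flagging explicitly is that nodes outside good pairs may well set $\know=\false$ over time and stop sending; this is harmless because the invariant only relies on the good-pair copies, and those are fed by good-pair copies on all incident edges.
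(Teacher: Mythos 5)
There is a genuine gap, and it sits exactly at the point you flag as ``delicate'' and then dismiss. Your induction invariant only asserts correctness of the state at the good-layer copies (the set the paper calls $C$). To close the induction step at a good-layer copy $v_i$ you argue that \emph{at least one} correct message arrives from some good-layer copy $w_j$ of each neighbor $w$. But Step~(3) of the protocol has $v_i$ update its state using ``a message from some $w_j\in V'$'' --- $v_i$ cannot tell which layer is the good one, so the argument must guarantee that \emph{every} message $v_i$ receives from a copy of $w$ is the message $w$ would send under $A$. The dangerous senders are not the copies that set $\know=\false$ (those are indeed silent and harmless), but copies $w_{j'}$ outside any good pair that still have $\know_{w_{j'}}=\true$: your invariant says nothing about whether their local state is correct, so for all you have proved they could transmit a wrong message over the complete bipartite inter-region gadget and corrupt $v_i$'s update. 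Your induction hypothesis is therefore too weak to carry out its own step.

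The fix is the strengthening the paper makes: induct simultaneously on (i) \emph{every} node in $T_r\triangleq\{v'\in V'\mid \know_{v'}=\true\text{ in round }r\}$ holds the correct state of $P(v')$, and (ii) $C\subseteq T_r$. Part~(i) guarantees that every message actually sent in round $r+1$ is correct (nodes with $\know=\false$ send nothing), so whichever received message a node uses in Step~(3) is the right one; part~(ii) guarantees, exactly by your cross-region/intra-region case split, that the good-layer copies never lose $\know=\true$, so each $v\in V$ retains a correctly simulating copy. Your case analysis of the two edge rules is the right ingredient for part~(ii); what is missing is part~(i), without which the lemma does not follow.
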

\begin{proof}
Select for each $R_{k'}$, $k'\in [k]$, some $i$ such that $\{v_i\,|\,v\in R_{k'}\}\cap F'=\emptyset$ and denote by $C$ the union of all these nodes. As $P(C)=V$, it suffices to show that each $v'\in C$ successfully maintains a copy of the state of $P(v')$ under $A$. However, we also need to make sure that \emph{all} messages, not only the ones sent by nodes in $c$, are ``correct,'' in the sense that a message sent over edge $(v',w')\in E'$ in round $r$ would be sent by $A$ over $(P(v'),P(w'))$ (where $\bot$ means no message is sent). Therefore, we will argue that the set of nodes $T_r\triangleq \{v'\in V'\,|\,\know_{v'}=\true \text{ in round }r\}$ knows the state of their counterpart $P(v')$ under $A$ up to and including round $r\in \NN$. As nodes $v'$ with $\know_{v'}=\false$ do not send any messages, this invariant guarantees that all sent messages are correct in the above sense.

We now show by induction on the round number $r\in \NN$ that (i) each $v'\in T_r$ knows the state of $P(v')$ under $A$ and (ii) $C\subseteq T_r$. Due to initialization, this is correct initially, i.e., in ``round $0$;'' we use this to anchor the induction at $r=0$, setting $T_0\triangleq V'$.

For the step from $r$ to $r+1$, note that because all $v'\in T_r$ have a copy of the state of $P(v')$ at the end of round $r$ by the induction hypothesis, each of them can correctly determine the message $P(v')$ would send over link $(v,w)\in E$ in round $r+1$ and send it over each $(v',w')\in E'$ with $P(w')=w$. Recall that $v'\in T_{r+1}$ if and only if $v'\in T_r$ and for each $(w,P(v'))\in E$ there is at least one $w'\in V'$ with $P(w')=w$ from which $v'$ receives a message. Since under \om{p} nodes in $F'$ may only omit sending messages, it follows that $v'\in T_{r+1}$ correctly updates the state variables of $P(v')$, just as $P(v')$ would in round $r+1$ of $A$.

It remains to show that $C\subseteq T_{r+1}$. Consider $v_i\in C$ and $(w,v)\in E$. If $v,w\in R_{k'}$ for some $k'\in [k]$, then $w_i\in C$ by definition of $C$. Hence, by the induction hypothesis, $w_i\in T_r$, and $w_i$ will send the message $w$ would send in round $r+1$ of $A$ over $(w,v)\in E$ to $v_i$, using the edge $(w_i,v_i)\in E'$. If this is not the case, then there is some $j\in [\ell]$ such that $w_j\in C$ and we have that $(w_j,v_i)\in E'$. Again, $v_i$ will receive the message $w$ would send in round $r+1$ of $A$ from $w_j$. We conclude that $v_i$ receives at least one copy of the message from $w$ for each $(w,v)\in E$, implying that $v\in T_{r+1}$ as claimed. Thus, the induction step succeeds and the proof is complete.
\end{proof}

Figure \ref{fig:net_compare} provides an example of a comparison between a network, a naive duplication of that network, and its reinforcement. The simulation process of sending a message in the same sample network is shown in Figure \ref{fig:sample_routing}.

\paragraph{Resilience of the Reinforcement}
We denote $R\triangleq \max_{k'\in [k]}\{|R_{k'}|\}$ and $r\triangleq \min_{k'\in [k]}\{|R_{k'}|\}$.
\begin{theorem}\label{thm:weak_om}
The above construction is a valid reinforcement for \om{p} if $p \in o((n/r)^{-1/(f+1)}/R)$. Moreover, if $G$ contains $\Omega(n)$ nodes with non-zero outdegree and $R\in O(1)$, $p\in \omega(n^{-1/(f+1)})$ implies that the reinforcement is not valid.
\end{theorem}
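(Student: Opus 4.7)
The approach is to derive the sufficient part directly from Lemma~\ref{applemma:weak_om}: it is enough to bound the probability that there exists some region $R_{k'}$ all of whose $\ell=f+1$ layers are ``broken,'' in the sense that layer $i$ contains at least one $v_i\in F'$ with $v\in R_{k'}$. For the impossibility direction, I would mirror the lower-bound argument used in Theorem~\ref{thm:strong_byz} and its \om{p} analog, producing a node $v$ all of whose copies are faulty.

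For the sufficient direction, fix a region $R_{k'}$ and a layer $i$. Since each $v_i$ is independently placed in $F'$ with probability $p$, the probability that layer $i$ is non-broken on $R_{k'}$ equals $(1-p)^{|R_{k'}|}\geq (1-p)^R$, so the probability that the layer is broken is at most $1-(1-p)^R\leq pR$ by Bernoulli's inequality. The events ``layer $i$ broken on $R_{k'}$'' are independent across $i\in[\ell]$, because distinct layers involve disjoint sets of nodes. Hence the probability that all $\ell=f+1$ layers are broken on $R_{k'}$ is at most $(pR)^{f+1}$. A union bound over the at most $k\leq n/r$ regions yields overall failure probability at most $(n/r)(pR)^{f+1}$, which is $o(1)$ whenever $p\in o((n/r)^{-1/(f+1)}/R)$. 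Combined with Lemma~\ref{applemma:weak_om}, this gives validity of the reinforcement in this regime.

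For the impossibility part, assume w.l.o.g.\ $p\leq 1/2$ (larger $p$ only increases the failure probability). For any fixed $v\in V$, the probability that all $\ell=f+1$ copies of $v$ are in $F'$ equals $p^{f+1}$, and these events are independent across different $v$. If $\Omega(n)$ nodes of $G$ have non-zero outdegree and $p\in\omega(n^{-1/(f+1)})$, then $p^{f+1}\in\omega(1/n)$ and so
\[
1-(1-p^{f+1})^{\Omega(n)}\;\geq\;1-\exp(-\omega(1))\;=\;1-o(1),
\]
meaning that a.a.s.\ at least one such node $v$ has all copies faulty. For algorithms $A$ and inputs in which $v$ sends a message along some outgoing edge $(v,w)$, none of the copies of $w$ ever receive this message (under \om{p}, faulty nodes only omit), so no $w_j$ can correctly update its state and $A'$ fails to simulate $A$.

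The main point to verify is that the two thresholds match in order when $R\in O(1)$: since $|R_{k'}|\leq R\in O(1)$ forces $r\in O(1)$ as well, both $R$ and $n/r$ collapse to constants times $n$, so $o((n/r)^{-1/(f+1)}/R)$ and $\omega(n^{-1/(f+1)})$ bracket the sharp threshold up to constant factors. No further subtlety arises; the crucial step in the upper bound is exploiting that the events ``layer $i$ broken within $R_{k'}$'' are independent across $i$, which is what allows the exponent $f+1$ to appear and thus pushes the tolerable $p$ from $\Theta(1/n)$ up to near $\Theta(n^{-1/(f+1)})$.
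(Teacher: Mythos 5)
Your proposal is correct and follows essentially the same route as the paper's proof: the sufficient direction bounds the per-layer ``broken'' probability by $1-(1-p)^{|R_{k'}|}\leq Rp$, uses independence across the $f+1$ layers to get $(Rp)^{f+1}$ per region, and union-bounds over the at most $n/r$ regions, exactly as the paper does via Lemma~\ref{applemma:weak_om}; the impossibility direction likewise exhibits a.a.s.\ a positive-outdegree node with all $f+1$ copies faulty. The only differences are cosmetic (you make the cross-layer independence explicit, and the unnecessary $p\leq 1/2$ normalization), so no further comment is needed.
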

\begin{proof}
By Lemma~\ref{applemma:weak_om}, $A'$ simulates $A$ if for each $k'\in [k]$, there is some $i\in [\ell]$ so that $\{v_i\,|\,v\in R_{k'}\}\cap F'=\emptyset$. For fixed $k'$ and $i\in [\ell]$,
\begin{equation*}
\Prr{\{v_i\,|\,v\in R_{k'}\}\cap F'=\emptyset}=(1-p)^{|R_{k'}|}\geq 1-Rp.
\end{equation*}
Accordingly, the probability that for a given $k'$ the precondition of the lemma is violated is at most $(Rp)^{f+1}$. As $k\leq n/r$, taking a union bound over all $k'$ yields that with probability at least $1-n/r\cdot (Rp)^{f+1}$, $A'$ simulates $A$. Therefore, the reinforcement is valid if $p \in o((n/r)^{-1/(f+1)}/R)$.

Now assume that $r\leq R\in O(1)$ and also that $p\in \omega(n^{-1/(f+1)})\subseteq \omega((n/r)^{-1/(f+1)}/R)$. Thus, for each $v\in V$, all $v'\in V'$ with $P(v')=v$ simultaneously end up in $F'$ with probability $\omega(1/n)$. Therefore, if $\Omega(n)$ nodes have non-zero outdegree, with a probability in $1-(1-\omega(1/n))^{\Omega(n)}=1-o(1)$ for at least one such node $v$ all its copies end up in $F'$. In this case, the simulation fails if $v$ sends a message under $A$, but all copies of $v'$ suffer omission failures in the respective round.
\end{proof}
\paragraph{Efficiency of the Reinforcement}
For $f\in \NN$, we have that $\nu = \ell = f+1$ and $\eta = (1-\varepsilon)\ell + \varepsilon \ell^2 = 1+(1+\varepsilon)f+\varepsilon f^2$, while we can sustain $p\in o(n^{-1/(f+1)})$.
In the special case of $f=1$ and $\varepsilon=1/5$, we improve from $p\in o(1/n)$ for the original network to $p\in o(1/\sqrt{n})$ by doubling the number of nodes and multiplying the number of edges by~$2.4$.

\begin{rem}
For hypercubes and tori, the asymptotic notation for $p$ does not hide huge constants.
Lemma~\ref{lemma:hypercube} shows that $h$ enters the threshold in Theorem~\ref{thm:weak_om} as $h^{-d+1/2}$.
For the cases of $d=2$ and $d=3$, which are the most typical (for $d>3$ grids and tori suffer from large distortion when embedding them into $3$-dimensional space), the threshold on $p$ degrades by factors of $11.2$ and $55.9$, respectively.
\end{rem}

\subsection{Simulation under \texorpdfstring{\byz{p}}{Byz(p)}}

The same strategy can be applied for the stronger fault model \byz{p}, if we switch back to having $\ell=2f+1$ copies and nodes accepting the majority message among all messages from copies of a neighbor in the original graph.

Consider node $v\in V$. We want to maintain the invariant that in each round, a majority among the nodes $v_i$, $i\in [\ell]$, has a copy of the state of $v$ in $A$. For $v'\in V'$ and $(w,P(v'))\in E$, set $N_{v'}(w)\triangleq \{w'\in V'\,|\,(w',v')\in E'\}$. With this notation, $v'$ behaves as follows.
\begin{compactenum}[\bfseries (1)]
\item It initializes local copies of all state variables of $v$ as in $A$.
\item It sends in each round on each link $(v',w')\in E'$ the message $v$ would send on $(P(v'),P(w'))$ when executing $A$ (if $v'$ cannot compute this correctly, it may send an arbitrary message).
\item It updates its state in round $r$ as if it received, for each $(w,P(v'))\in E$, the message the majority of nodes in $N_{v'}(w)$ sent.
\end{compactenum}

\begin{lemma}\label{lemma:weak_byz}
Suppose for each $k'\in [k]$, there are at least $f+1$ indices $i\in [\ell]$ so that $\{v_i\,|\,v\in R_{k'}\}\cap F'=\emptyset$. Then $A'$ simulates $A$.
\end{lemma}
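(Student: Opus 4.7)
The plan is to mirror the proof of Lemma~\ref{applemma:weak_om}, but replace the ``at least one good layer per region'' argument with a ``strict majority of good layers per region'' argument, which matches the Byzantine setting in which $v_i$ decides by majority over $N_{v_i}(w)$. Because the same-region edges in $E_e'$ only connect matching indices while cross-region edges fully interconnect, two qualitatively different cases must be handled separately, and this case split is the place where the main work lies.

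First, I would set up notation. For each $k'\in[k]$, fix a set $I_{k'}\subseteq[\ell]$ with $|I_{k'}|\ge f+1$ of indices $i$ such that no $v_i$ with $v\in R_{k'}$ lies in $F'$ (this exists by the hypothesis). Let $C\triangleq\{v_i\,|\, k'\in[k],\; v\in R_{k'},\; i\in I_{k'}\}$. Because the $R_{k'}$ partition $V$, each $v\in V$ contributes exactly $|I_{k'}|\ge f+1$ copies to $C$, which already exceeds $\ell/2=f+\tfrac12$. Hence, to establish simulation in the sense of the definition under \byz{p}, it suffices to prove by induction on the round number $r\in\NN$ that every $v_i\in C$ locally maintains a correct copy of the state of $P(v_i)$ under $A$.

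For the induction step (the base case $r=0$ is immediate from initialization), assume every $v_i\in C$ holds the correct state at the end of round $r$. Then every such $v_i$ sends on each outgoing link $(v_i,w')\in E'$ precisely the message that $v=P(v_i)$ would transmit on $(v,P(w'))$ in round $r+1$ of $A$. Fix $v_i\in C$ with $v\in R_{k'}$, $i\in I_{k'}$, and an in-neighbor $w$ of $v$ in $G$. In the \textbf{same-region case} $w\in R_{k'}$, by definition of $E_e'$ we have $N_{v_i}(w)=\{w_i\}$; since $i\in I_{k'}$ and $w\in R_{k'}$, the unique sender $w_i$ lies in $C$ and thus transmits the correct message by the induction hypothesis, so the (trivial) majority over $N_{v_i}(w)$ is correct. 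In the \textbf{cross-region case} $w\in R_{k''}$ with $k''\neq k'$, we have $N_{v_i}(w)=\{w_j\,|\,j\in[\ell]\}$; by the induction hypothesis, every $w_j$ with $j\in I_{k''}$ is in $C$ and sends the correct message, and $|I_{k''}|\ge f+1$ out of $\ell=2f+1$ is a strict majority, so the majority message received at $v_i$ is the correct one.

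Combining the two cases, $v_i$'s state-update rule feeds it, for every in-neighbor $w$ of $v$, the same message that $v$ would receive from $w$ in round $r+1$ of $A$. Consequently $v_i$ ends round $r+1$ with the correct state of $v$, closing the induction. Together with $|C\cap P^{-1}(v)|\ge f+1$ for all $v\in V$, this yields that $A'$ simulates $A$. The only nontrivial step is the case distinction in the induction step: one must notice that the restricted intra-region edges are still safe precisely because the sets $I_{k'}$ are attached to whole regions and so the ``surviving layer'' $i\in I_{k'}$ chosen at $v$ automatically supplies a correct sender $w_i$ for every same-region in-neighbor $w$, while across regions the full bipartite connectivity of $E_e'$ restores the $(f{+}1)$-out-of-$(2f{+}1)$ majority needed for Byzantine tolerance.
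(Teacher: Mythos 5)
Your proof is correct and follows essentially the same route as the paper's: define $C$ from $f+1$ fault-free layers per region, induct on the round number, and split the in-neighbor analysis into the same-region case (singleton $N_{v'}(w)=\{w_i\}$ with $w_i\in C$) and the cross-region case (at least $f+1$ of $2f+1$ senders correct, hence a strict majority). Your added observation that $|C\cap P^{-1}(v)|\geq f+1$ yields the strict-majority requirement of the simulation definition is a small, welcome explicit step that the paper leaves implicit.
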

\begin{proof}
Select for each $R_{k'}$, $k'\in [k]$, $f+1$ indices $i$ such that $\{v_i\,|\,v\in R_{k'}\}\cap F'=\emptyset$ and denote by $C$ the union of all these nodes. We claim that each $v'\in C$ successfully maintains a copy of the state of $P(v')$ under $A$. We show this by induction on the round number $r\in \NN$, anchored at $r=0$ due to initialization.

For the step from $r$ to $r+1$, observe that because all $v'\in C$ have a copy of the state of $P(v')$ at the end of round $r$ by the induction hypothesis, each of them can correctly determine the message $P(v')$ would send over link $(v,w)\in E$ in round $r+1$ and send it over each $(v',w')\in E$ with $P(w')=w$. For each $v'\in C$ and each $(w,P(v'))$, we distinguish two cases. If $P(v')$ and $w$ are in the same region, let $i$ be such that $v'=v_i$. In this case, $N_{v'}(w)=\{w_i\}$ and, by definition of $C$, $w_i\in C$. Thus, by the induction hypothesis, $w_i$ sends the correct message in round $r+1$ over the link $(w',v')$. On the other hand, if $P(v')$ and $w$ are in different regions, $N_{v'}(w)=\{w_i\,|\,i\in [\ell]\}$. By the definition of $C$ and the induction hypothesis, the majority of these nodes (i.e., at least $f+1$ of them) sends the correct message $w$ would send over $(w,P(v'))$ in round $r+1$ when executing $A$. We conclude that $v'$ correctly updates its state, completing the proof.
\end{proof}

\paragraph{Resilience of the Reinforcement}
As before, denote $R\triangleq \max_{k'\in [k]}\{|R_{k'}|\}$ and $r\triangleq \min_{k'\in [k]}\{|R_{k'}|\}$.
\begin{theorem}\label{thm:byz}
The above construction is a valid reinforcement for the fault model \byz{p} if $p \in o((n/r)^{-1/(f+1)}/R)$. Moreover, if $G$ contains $\Omega(n)$ nodes with non-zero outdegree, $p\in \omega(n^{-1/(f+1)})$ implies that the reinforcement is not valid.
\end{theorem}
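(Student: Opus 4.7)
The plan is to mirror the proof of Theorem~\ref{thm:weak_om} while reusing the binomial tail calculation from Theorem~\ref{thm:strong_byz}. By Lemma~\ref{lemma:weak_byz}, it suffices to show that a.a.s.\ every region $R_{k'}$ admits at least $f+1$ ``clean columns,'' i.e., indices $i\in[\ell]$ with $\{v_i\mid v\in R_{k'}\}\cap F'=\emptyset$. The key structural observation is that while within a single column the events $\{v_i\in F'\}$, $v\in R_{k'}$, are correlated, across distinct columns they are \emph{independent}, since $F'$ is determined by independent coin flips for each copy in $V'$. This independence will let me apply the binomial tail bound of Theorem~\ref{thm:strong_byz} directly, with $p$ replaced by the per-column failure probability.

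For the upper bound I would fix a region $R_{k'}$ and first bound the probability that column $i$ is ``bad'' (contains at least one faulty copy of some node of the region) by $1-(1-p)^{|R_{k'}|}\leq Rp$. Column-wise independence implies that the number of bad columns is stochastically dominated by $\mathrm{Bin}(2f+1, Rp)$; replacing $p$ by $Rp$ in the estimate from Theorem~\ref{thm:strong_byz} then shows that more than $f$ columns are bad with probability at most $(2e)^f(Rp)^{f+1}/(1-Rp)$. A union bound over the $k\leq n/r$ regions yields a total failure probability in $O((n/r)(Rp)^{f+1})$, which is $o(1)$ exactly when $p\in o((n/r)^{-1/(f+1)}/R)$, as claimed.

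For the lower bound I would reuse the argument from Theorem~\ref{thm:strong_byz} almost verbatim. Assuming $p\leq 1/3$, for each $v\in V$ the probability that strictly more than $f$ of its $\ell=2f+1$ copies lie in $F'$ is at least $p^{f+1}$, and these events are independent across $v$. Hence if $\Omega(n)$ nodes have non-zero outdegree and $p\in\omega(n^{-1/(f+1)})$, then a.a.s.\ some such $v$ has more than $f$ faulty copies. Since the Byzantine adversary may make these copies store arbitrary state, the strict-majority requirement for $v$ in the simulation definition is violated and validity fails. The main (mild) obstacle here is purely bookkeeping: one has to check that the $R$-factor inside $Rp$ propagates correctly through the binomial exponent to yield the stated threshold $o((n/r)^{-1/(f+1)}/R)$ rather than a weaker bound, but once per-column independence is identified the calculation is mechanical.
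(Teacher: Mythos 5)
Your proposal is correct and follows essentially the same route as the paper's proof: bound the per-column failure probability by $Rp$ using $(1-p)^{|R_{k'}|}\geq 1-Rp$, plug $Rp$ into the binomial tail estimate from Theorem~\ref{thm:strong_byz}, union-bound over the at most $n/r$ regions, and for the converse reuse the $p^{f+1}$ lower bound per node together with independence across nodes. The only (immaterial) difference is in the final step of the lower bound, where the paper exhibits an operational failure by tracing how the $f+1$ faulty copies of $v$ corrupt the state of the correct copies of a neighbor $w$ (covering both the same-region and cross-region cases), whereas you appeal directly to the strict-majority clause of the simulation definition at $v$ itself.
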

\begin{proof}
By Lemma~\ref{lemma:weak_byz}, $A'$ simulates $A$ if for each $k'\in [k]$, there are at least $f+1$ indices $i\in [\ell]$ so that $\{v_i\,|\,v\in R_{k'}\}\cap F'=\emptyset$. For fixed $k'$ and $i\in [\ell]$,
\begin{equation*}
\Prr{\{v_i\,|\,v\in R_{k'}\}\cap F'=\emptyset}=(1-p)^{|R_{k'}|}\geq 1-Rp.
\end{equation*}
Thus, analogous to the proof of Theorem~\ref{thm:strong_byz}, the probability that for a given $k'$ the condition is violated is at most
\begin{align*}
&\sum_{j=f+1}^{2f+1}\binom{2f+1}{j}(Rp)^j(1-Rp)^{2f+1-j}\\
=\,& (2e)^f(Rp)^{f+1}(1+o(1)).
\end{align*}
By a union bound over the at most $n/r$ regions, we conclude that the precondition $p \in o((n/r)^{-1/(f+1)}/R)$ guarantees that the simulation succeeds a.a.s.

For the second statement, observe that for each node $v\in V$ of non-zero outdegree,
\begin{equation*}
\Prr{|\{v_i\}\cap F'|\geq f+1}\geq p^{f+1}= \omega\left(\frac{1}{n}\right).
\end{equation*}
Thus, a.a.s.\ there is such a node $v$. Let $(v,w)\in E$ and assume that $A$ sends a message over $(v,w)$ in some round. If $v$ and $w$ are in the same region, the faulty nodes sending an incorrect message will result in a majority of the $2f+1=|\{w'\in V'\,|\,P(w')=w\}|$ copies of $w$ attaining an incorrect state (of the simulation), i.e., the simulation fails. Similarly, if $w$ is in a different region than $v$, for each copy of $w$ the majority message received from $N_{w'}(v)$ will be incorrect, resulting in an incorrect state.
\end{proof}
\begin{rem}
Note that the probability bounds in Theorem~\ref{thm:byz} are essentially tight in case $R\in O(1)$. A more careful analysis establishes similar results for $r\in \Theta(R)\cap \omega(1)$, by considering w.l.o.g.\ the case that all regions are connected and analyzing the probability that within a region, there is some path so that for at least $f+1$ copies of the path in $G'$, some node on the path is faulty. However, as again we consider the case $R\in O(1)$ to be the most interesting one, we refrain from generalizing the analysis.
\end{rem}
\paragraph{Efficiency of the Reinforcement}
For $f\in \NN$, we have that $\nu = \ell = 2f+1$ and $\eta = (1-\varepsilon)\ell + \varepsilon \ell^2 = 1+(2+2\varepsilon)f+4\varepsilon f^2$, while we can sustain $p\in o(n^{-1/(f+1)})$.
In the special case of $f=1$ and $\varepsilon=1/5$, we improve from $p\in o(1/n)$ for the original network to $p\in o(1/\sqrt{n})$ by tripling the number of nodes and multiplying the number of edges by~$4.2$.

\section{Empirical Evaluation}\label{sec:eval}

We have shown that our approach from \S~\ref{sec:eff} works particularly well
for graphs that admit a certain partitioning, such as
sparse graphs (e.g., minor-free graphs) or low-dimensional
hypercubes. To provide some empirical motivation for the relevance
of these examples, we note that the topologies collected
in the Rocketfuel \cite{spring2002measuring} and Internet Topology Zoo \cite{knight2011internet} projects
are all sparse: almost a third (namely 32\%) of the topologies even belong to the family of
cactus graphs, and roughly half of the graphs (49\%) are outerplanar \cite{sigmetrics18tomography}.

To complement our analytical results and study the reinforcement cost
of our approach in realistic networks, we conducted simulations on
the around 250 networks from the Internet Topology Zoo.
While we have a fairly good understanding of the different network topologies
deployed in practice, unfortunately, little is known about the state-of-the-art protection mechanisms used by network operators today. Network operators are typically reluctant to share details about their  infrastructure for security reasons, rendering a comparative evaluation difficult. That said, it seems relatively safe to assume that the most robust solutions rely on an one-by-one (``A/B'') replication strategy which allows to completely reroute traffic to a backup network; this baseline requires doubling resources and can hence be fairly costly.

In the following, we will report on our main insights.
Due to space constraints, we focus on the case of omission faults;
the results for Byzantine faults follow the same general trends.

Recall that we replace each node by $f+1$ of its copies, and each edge with endpoints in
different regions of the partition with $(f+1)^{2}$ copies; every other edge is replaced by $f+1$ copies.
Our goal is to do this partitioning such that it minimizes the edge overhead of the new network and
maximizes the probability of the network's resilience.
The fault probability of the network for given $p$, $f$ and partitions with $l_{1}, l_{2}, ..., l_{k}$ nodes is calculated as
$1 - \prod_{i=1}^{k} [1-(1-(1-p)^{l_{i}})^{f+1}]$.

In the following, as a case study, we fix a target network failure probability of at most $0.01$.
That is, the reinforced network is guaranteed to operate correctly with a probability of $99\%$, and we aim to maximize the probability $p$ with which nodes independently fail subject to this constraint.
For this fixed target resilience of the network, we determine the value of $p$ matching it using the above formula.
We remark that the qualitative behavior for smaller probabilities of network failure is the same, where the more stringent requirement means that our scheme outperforms naive approaches for even smaller network sizes.

For the examined topologies, it turned out that no specialized tools were needed to find good partitionings.
We considered a \emph{Spectral Graph Partitioning} tool~\cite{Hagen1992NewSM} and Metis \cite{Karypis1998fast},
a partitioning algorithm from a python library.
For small networks (less than 14 nodes), we further implemented a brute-force algorithm,
which provides an optimal baseline.

\begin{figure}[t]
	\centering
	\makebox[0pt]{\includegraphics[width=0.5\textwidth,trim=0 00bp 0 000bp,clip]{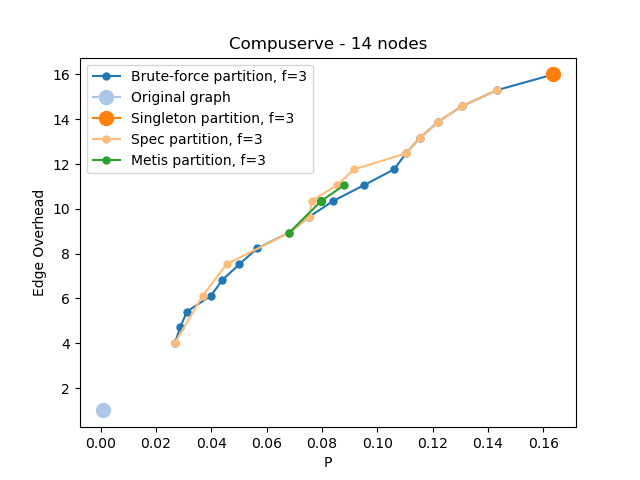}}
	\caption{Edge overhead for $f=3$ and different partitioning algorithms as a function of $p$.}
	\label{fig:compuserve}
\end{figure}

Figure \ref{fig:compuserve} shows the resulting edge overheads for the different partitioning algorithms
as a function of $p$ and for $f=3$, at hand of a specific example.
For reference, we added the value of $p$ for the original graph ($f=0$) to the plot, which has an overhead factor of $1$ (no redundancy).

As to be expected, for each algorithm and the fixed value of $f=3$, as the number of components in partitionings increases, the edge overhead and $p$
increase as well.
The ``Singleton partition'' point for $f=3$ indicates the extreme case where the size of the components is equal to 1 and the approach becomes identical to strong reinforcement (see \S~\ref{sec:strong_om});
hence, it has an edge overhead of $(f+1)^{2}=16$.
The leftmost points of the $f=3$ curves correspond to the other extreme of ``partitioning'' the nodes into a single set, resulting in naive replication of the original graph, at an edge overhead of $f+1=4$.

We observed this general behavior for networks of all sizes under varying $f$, where the spectral partitioning consistently outperformed Metis, and both performed very close to the brute force algorithm on networks to which it was applicable.
We concluded that the spectral partitioning algorithm is sufficient to obtain results that are close to optimal for the considered graphs, most of which have fewer than 100 nodes, with only a handful of examples with size between 100 and 200.
Accordingly, in the following we confine the presentation to the results obtained using the spectral partitioning algorithm.

\begin{figure}[t]
	\centering
	\makebox[0pt]{\includegraphics[width=0.5\textwidth,trim=0 00bp 0 000bp,clip]{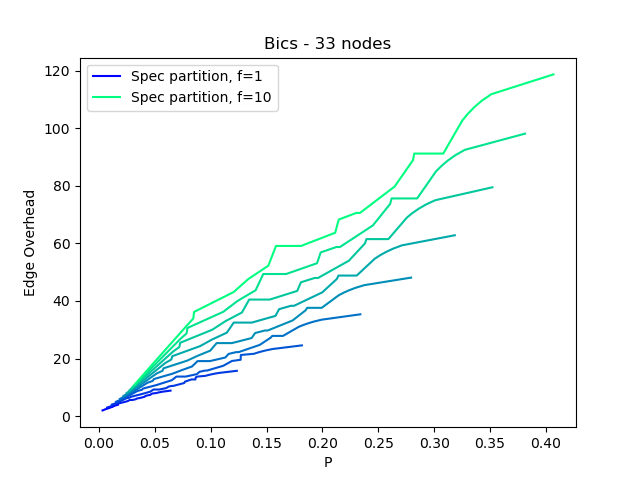}}
	\caption{Edge overhead for $f\in [1,10]$ as a function of $p$.}
	\label{fig:bisc10}
\end{figure}

In Figure \ref{fig:bisc10}, we take a closer look on how the edge overhead
depends on $f$, at hand of a network of 33 nodes. Note that the partitionings do not depend on $f$, causing the 10 curves to have similar shape.
As $f$ increases, the node overhead, edge overhead, and $p$ for the reinforced networks increase.
We can see that it is advisable to use larger values of $f$ only if the strong reinforcement approach for smaller $f$ cannot push $p$ to the desired value.
We also see that $f=1$ is sufficient to drive $p$ up to more than $6\%$, improving by almost two orders of magnitude over the roughly $0.01/33\approx 0.03\%$ the unmodified network can tolerate with probability $99\%$.
While increasing $f$ further does increase resilience, the relative gains are much smaller, suggesting that $f=1$ is the most interesting case.

\begin{figure}[t]
	\centering
	\makebox[0pt]{\includegraphics[width=0.5\textwidth,trim=0 00bp 0 000bp,clip]{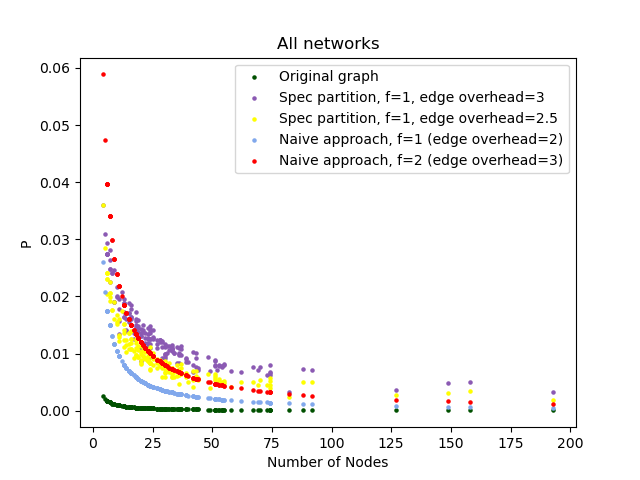}}
	\caption{Study of $p$ for all Topology Zoo networks and $f=1$, sorted by size.\label{fig:allnetworks}}
\end{figure}

Following up on this, in Figure \ref{fig:allnetworks} we plot $p$ for all existing networks in the Topology Zoo using the spectral graph partitioning algorithm and $f=1$.
Specifically, for each network, we calculated the value of $p$ on a set of reinforced networks with different node and edge overheads. Naturally, with increasing network size, the value of $p$ that can be sustained at a given overhead becomes smaller. Note, however, that naive replication quickly loses ground as $n$ becomes larger. In particular, already for about 20 nodes, an edge overhead of 3 with our approach is better than adding \emph{two} redundant copies of the original network, resulting in more nodes, but the same number of edges. Beyond roughly 50 nodes, our approach outperforms two independent copies of the network using fewer edges, i.e., an edge overhead of 2.5.

\begin{figure}[t]
	\centering
	\makebox[0pt]{\includegraphics[width=0.5\textwidth,trim=0 00bp 0 000bp,clip]{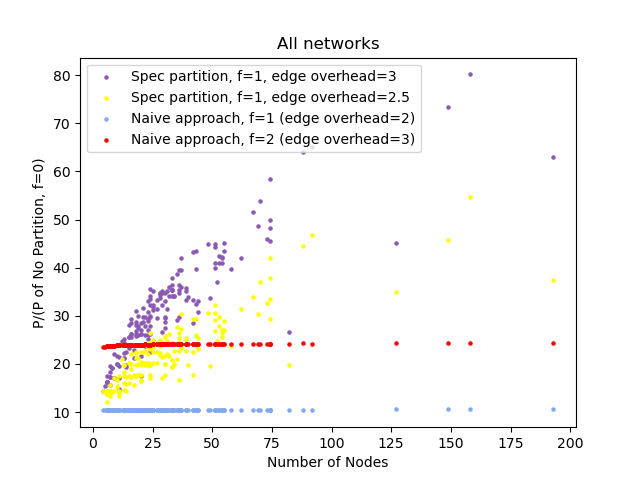}}
	\caption{Relative improvement over baseline for all Topology Zoo networks.\label{allnetworksratio}}
\end{figure}

To show more clearly when our approach outperforms naive network replication, Figure \ref{allnetworksratio} plots the \emph{relative gain} in the probability $p$ of node failure that can be sustained compared to the original network.

This plot is similar to the previous one. The y-axis now represents $p$ divided by the value of $p$ for the original graph. We now see that naive replication provides an almost constant improvement across the board. This is due to the fact that under this simple scheme, the reinforcement fails as soon as in each copy of the graph at least one node fails, as it is possible that a routing path in the original graph involves all nodes corresponding to failed copies.

Denote by $p_k$ the probability of node failure that can be sustained with $99\%$ reliability when simply using $k$ copies of the original graph (in particular $p_1\approx 0.01/n$). For small $k$, the probability $(1-p_k)^n$ that a single copy of the original graph is fault-free needs to be close to $1$. Hence, we can approximate $(1-p_k)^n\approx 1-p_k n$. The probability that all copies contain a failing node is hence approximately $(p_kn)^k$. Thus, $p_1 n \approx 0.01\approx (p_k n)^k$, yielding that
\begin{equation*}
\frac{p_k}{p_1}=\frac{p_k n}{p_1 n}\approx \frac{0.01^{1/k}}{0.01}=100^{1-1/k}.
\end{equation*}
In particular, we can expect ratios of roughly $10$ for $k=2$ and $21.5$ for $k=3$, respectively. The small discrepancy to the actual numbers is due to the approximation error, which would be smaller for higher target resilience.

As the plot clearly shows, our method achieves a relative improvement that increases with $n$, as predicted by Theorem~\ref{thm:weak_om}.
In conclusion, we see that our approach promises substantial improvements over the naive replication strategy,
which is commonly employed in mission-critical networks
(e.g., using dual planes as in RFC 7855~\cite{springpsr}).

\section{Discussion}\label{sec:disc}

In the previous sections, we have established that constant-factor redundancy can significantly increase reliability of the communication network in a blackbox fashion. Our constructions in \S~\ref{sec:eff} are close to optimal. Naturally, one might argue that the costs are still too high. However, apart from pointing out that the costs of using sufficiently reliable components may be even higher, we would like to raise a number of additional points in favor of the approach.

\paragraph{Node Redundancy}
When building reliable large-scale systems, fault-tolerance needs to be considered on all system levels. Unless nodes are sufficiently reliable, node replication is mandatory, regardless of the communication network. In other words, the node redundancy required by our construction may not be an actual overhead to begin with. When taking this point of view, the salient question becomes whether the increase in links is acceptable. Here, the first observation is that any system employing node redundancy will need to handle the arising additional communication, incurring the respective burden on the communication network. Apart from still having to handle the additional traffic, however, the system designer now needs to make sure that the network is sufficiently reliable for the node redundancy to matter. Our simple schemes then provide a means to provide the necessary communication infrastructure without risking to introduce, e.g., a single point of failure during the design of the communication network; at the same time, the design process is simplified and modularized.

\paragraph{Dynamic Faults}
Because of the introduced fault-tolerance, faulty components do not impede the system as a whole, so long as the simulation of the routing scheme can still be carried out. Hence, one may repair faulty nodes at runtime. If $T$ is the time for detecting and fixing a fault, we can discretize time in units of $T$ and denote by $p_T$ the (assumed to be independent) probability that a node is faulty in a given time slot, which can be bounded by twice the probability to fail within $T$ time. Then the failure probabilities we computed in our analysis directly translate to an upper bound on the expected fraction of time during which the system is not (fully) operational.

\paragraph{Adaptivity}
The employed node- and link-level redundancy may be required for mission-critical applications only, or the system may run into capacity issues. In this case, we can exploit that the reinforced network has a very simple structure, making various adaptive strategies straightforward to implement.
\begin{enumerate}[(i)]
  \item One might use a subnetwork only, deactivating the remaining nodes and links, such that a reinforced network for smaller $f$ (or a copy of the original network, if $f=0$) remains. This saves energy.
  \item One might subdivide the network into several smaller reinforced networks, each of which can perform different tasks.
  \item One might leverage the redundant links to increase the overall bandwidth between (copies of) nodes, at the expense of reliability.
  \item The above operations can be applied locally; e.g., in a congested region of the network, the link redundancy could be used for additional bandwidth. Note that if only a small part of the network is congested, the overall system reliability will not deteriorate significantly.
\end{enumerate}
Note that the above strategies can be refined and combined according to the profile of requirements of the system.

\section{Related Work}
\label{sec:relwork}

Robust routing is an essential feature of dependable
communication networks, and has been explored
intensively in the literature already.

\paragraph*{Resilient Routing on the Network Layer}
In contrast to our approach,
existing resilient routing mechanisms on the network layer
are typically \emph{reactive}.
They
can be categorized
according to whether they are supported in the
control plane, e.g.,
\cite{DBLP:conf/spaa/BuschST03,corson1995distributed,francois2005achieving,gafni-lr,greenberg2005clean,oran1990rfc1142},
or in the data plane, e.g.,~\cite{purr,srds19failover,ref4,ddc,plinko-full,keep-fwd},
see also the recent survey  \cite{frr-survey}.
These mechanisms are usually designed to cope with link failures.
Resilient routing algorithms in the control plane
typically rely on a global recomputation of paths
(either
centralized \cite{vahdat2015purpose},
distributed \cite{francois2005achieving}
or both \cite{icnp15shear}),
or on techniques based on link reversal \cite{gafni-lr}, and can
hence re-establish policies relatively easily;
however, they come at the price of a relatively high restoration time
\cite{francois2005achieving}.
Resilient routing algorithms in the dataplane can react to failures
significantly faster~\cite{feigenbaum2012brief}; however,
due to the local nature of the failover, it is challenging to
maintain network policies or even a high degree of resilience~\cite{chiesa2016resiliency}.
In this line of literature,
the network is usually given and the goal is to re-establish
routing paths quickly, ideally as long as the underlying physical
network is connected (known as perfect resilience~\cite{feigenbaum2012brief,disc20}).

In contrast, in this paper we ask the question of how to proactively enhance the
network in order to tolerate failures, rather than reacting to them. In particular, we consider more general failures,
beyond link failures and benign faults.
We argue that such a re-enforced
network simplifies routing as it is not necessary to compute new paths.
The resulting problems are very different in nature, also in terms
of the required algorithmic techniques.

\paragraph*{Local Faults}
In this paper, we consider more general failure models
than typically studied in the resilient routing literature above,
as our model is essentially a local fault model.
Byzantine faults were studied in~\cite{dolev2008constant,pelc2005broadcasting} in the context of broadcast and consensus problems. Unlike its global classical counterpart, the $f$-local Byzantine adversary can control at most $f$ neighbors of each vertex. This more restricted adversary gives rise to more scalable solutions, as the problems can be solved in networks of degree $O(f)$; without this restriction, degrees need to be proportional to the \emph{total} number of faults in the network.

We also limit our adversary in its selection of Byzantine nodes, by requiring that the faulty nodes are chosen independently at random. As illustrated, e.g., by Lemma~\ref{lemma:sim_byz} and Theorem~\ref{thm:strong_byz}, there is a close connection between the two settings. Informally, we show that certain values of $p$ correspond, asymptotically almost surely (a.a.s), to an $f$-local Byzantine adversary. However, we diverge from the approach in~\cite{dolev2008constant,pelc2005broadcasting} in that we require a fully time-preserving simulation of a fault-free routing schedule, as opposed to solving the routing task in the reinforced network from scratch.

\paragraph*{Fault-Tolerant Logical Network Structures}
Our work is reminiscent of literature on
the design fault-tolerant network structures.
In this area (see~\cite{Parter16} for a survey), the goal is to compute a sub-network that has a predefined property, e.g., containing minimum spanning tree. More specifically, the sub-network should sustain adversarial omission faults without losing the property. Hence, the sub-network is usually augmented (with edges) from the input network in comparison to its corresponding non-fault-tolerant counterpart. Naturally, an additional goal is to compute a small such sub-network. In contrast, we design a network that is reinforced (or augmented) by additional edges and nodes so that a given routing scheme can be simulated while facing randomized Byzantine faults. As we ask for being able to ``reproduce'' an arbitrary routing scheme (in the sense of a simulation relation), we cannot rely on a sub-network.

The literature also considered random fault models.
In the network reliability problem, the goal is to compute the probability that the (connected) input network becomes disconnected under random independent edge failures. The reliability of a network is the probability that the network remains connected after this random process.
Karger~\cite{karger2001randomized} gave a fully polynomial randomized approximation scheme for the network reliability problem.
Chechik et.~al~\cite{chechik2012sparse} studied a variant of the task, in which the goal is to compute a sparse sub-network that approximates the reliability of the input network.
We, on the other hand, construct a reinforced network that increases the reliability of the input network;
note also that our requirements are much stricter than merely preserving connectivity.

\paragraph*{Self-healing systems}
In the context of self-healing routing (e.g., Casta\~{n}eda et~al.~\cite{Castaneda2016}), researchers have studied a model where an adversary removes nodes in an online fashion, one node in each time step (at most $n$ such steps). In turn, the distributed algorithm adds links and sends at most $O(\Delta)$ additional messages to overcome the inflicted omission fault.
Ideally, the algorithm is ``compact'': each node's storage is limited to $o(n)$ bits.
A nice property of the algorithm in~\cite{Castaneda2016} is that the degrees are increased by at most $3$. For our purposes, an issue is that the diameter is increased by a logarithmic factor of the maximum initial degree, and hence the same holds for the latency of the routing scheme. Instead, we design a network that is ``oblivious'' to faults in the sense that the network is ``ready'' for independent random faults up to a certain probability, without the need to reroute messages or any other reconfiguration. Moreover, our reinforcements tolerate Byzantine faults and work for arbitrary routing schemes. We remark that compact self-healing routing schemes also deal with the update time of the local data structures following the deletion of a node; no such update is required in our approach.

\paragraph*{Robust Peer-to-Peer Systems}
Peer-to-peer systems are often particularly dynamic and the development
of robust algorithms hence crucial.
Kuhn et.~al~\cite{kuhn2010towards} study faults in peer-to-peer systems in which an adversary adds and removes nodes from the network within a short period of time (this process is also called churn). In this setting, the goal is to maintain functionality of the network in spite of this adversarial process. Kuhn et~al.~\cite{kuhn2010towards} considered hypercube and pancake topologies, with a powerful adversary that cannot be ``fooled'' by randomness. However, it is limited to at most $O(\Delta)$ nodes, where $\Delta$ is the (maximum) node degree, which it can add or remove within any constant amount of time. The main idea in~\cite{kuhn2010towards} is to maintain a balanced partition of the nodes, where each part plays the role of a supernode in the network topology. This is done by rebalancing the nodes after several adversarial acts, and increasing the dimensionality of the hypercube in case the parts become too big.

Hypercubes were also of particular interest in this paper. We employ two partitioning techniques to make sure that: (1)~the size of each part is constant and (2)~the number of links in the cut between the parts is at most $\eps\cdot n$, where $n$ is the number of nodes. These partitioning techniques help us dial down the overheads within each part, and avoid a failure of each part due to its small size. However, we note that our motivation for considering these topologies is that they are used as communication topologies, for which we can provide good reinforcements, rather than choosing them to exploit their structure for constructing efficient and/or reliable routing schemes (which is of course one, but not the only reason for them being used in practice).

\section{Conclusion}\label{sec:conc}

In this paper, we proposed simple replication strategies for improving network reliability. Despite being simple and general, both in terms of their application and analysis, our strategies can substantially reduce the required reliability on the component level to maintain network functionality compared the baseline, without losing messages or increasing latencies.
The presented transformations allow us to directly reuse non-fault-tolerant routing schemes as a blackbox,
and hence avoid the need to refactor working solutions.
We consider this property highly useful in general and essential in real-time systems.

Hence, being prepared for non-benign faults can be simple, affordable, and practical, and therefore enables building larger reliable networks. Interestingly, while our basic schemes may hardly surprise, we are not aware of any work systematically exploring and analyzing this perspective.

We understand our work as a first step and believe that it opens
several interesting avenues for future research.
For example:
\begin{itemize}[(i)]
  \item Which network topologies allow for good partitions as utilized in \S~\ref{sec:eff}? Small constants here result in highly efficient reinforcement schemes, which are key to practical solutions.
  \item Is it possible to guarantee strong simulations at smaller overheads?
  \item Can constructions akin to the one given in \S~\ref{sec:eff} be applied to a larger class of graphs?
\end{itemize}

On the practical side, while
our simulations indicate that our approach
can be significantly more efficient than a naive one-by-one replication strategy
to provision
dependable ISP networks,
it will be interesting to extend these empirical studies and also consider
practical aspects such as the incremental deployment
in specific networks.

\noindent \textbf{Acknowledgments.}
This project has received funding from the European Research Council (ERC) under the European Union's Horizon 2020 research and innovation programme (grant agreement 716562) and from the Vienna Science and Technology Fund (WWTF), under grant number ICT19-045 (project WHATIF). 
This research was supported by the Israel Science Foundation under Grant 867/19.

\bibliographystyle{spmpsci}
\bibliography{robust}


\begin{IEEEbiography}[{\includegraphics[width=1in,height=1.25in,clip,keepaspectratio]{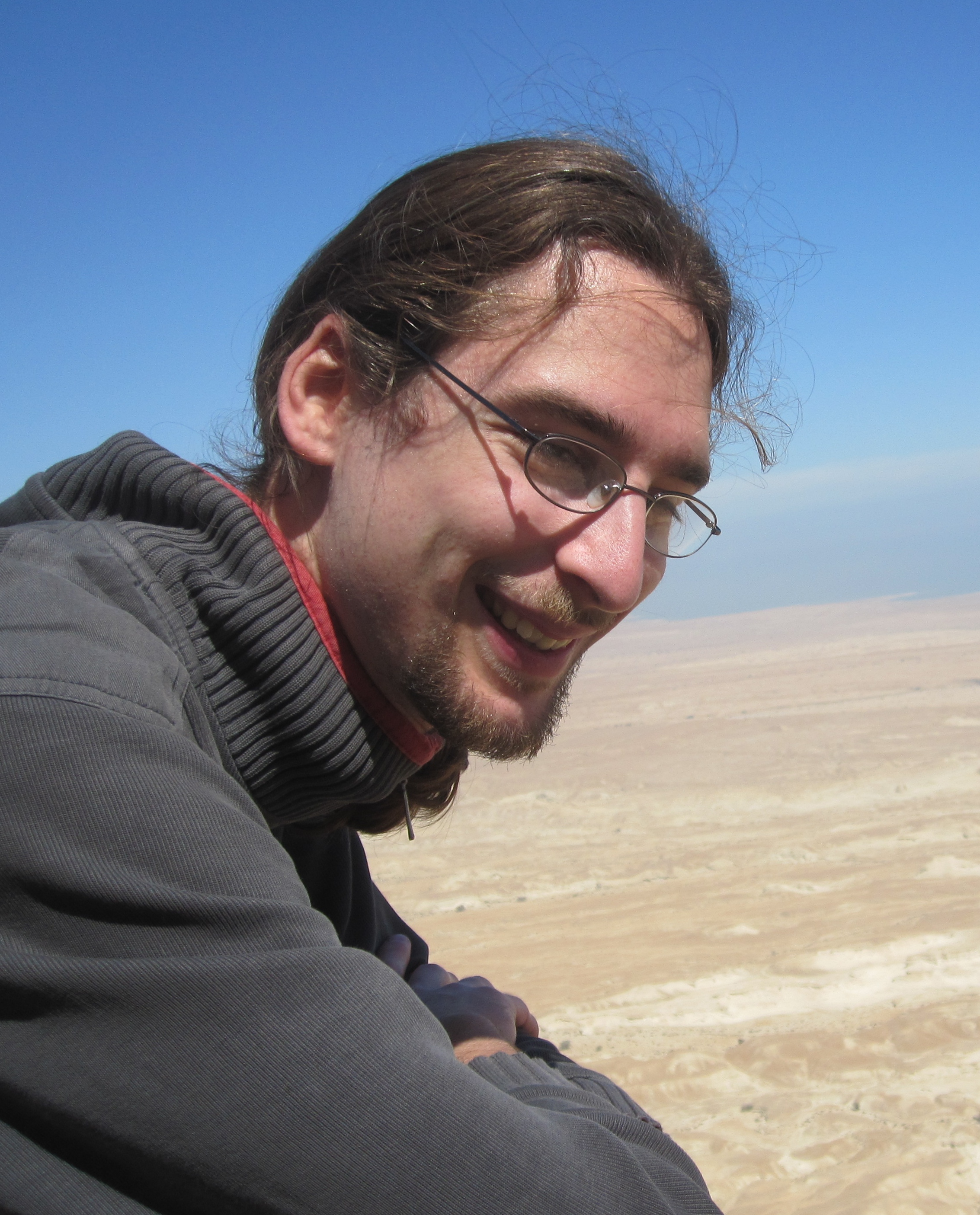}}]{Christoph Lenzen}
received a diploma degree in mathematics from the University of Bonn in 2007 and a
Ph.\,D.\ degree from ETH Zurich in 2011. After postdoc positions at the Hebrew University of Jerusalem,
the Weizmann Institute of Science, and MIT, he became group leader at MPI for Informatics in 2014.
In 2021 he became faculty member at CISPA.
He received the best paper award at PODC 2009, the ETH medal for his dissertation, and in 2017 an ERC starting grant.
\end{IEEEbiography}

\begin{IEEEbiography}[{\includegraphics[width=1in,height=1.25in,clip,keepaspectratio]{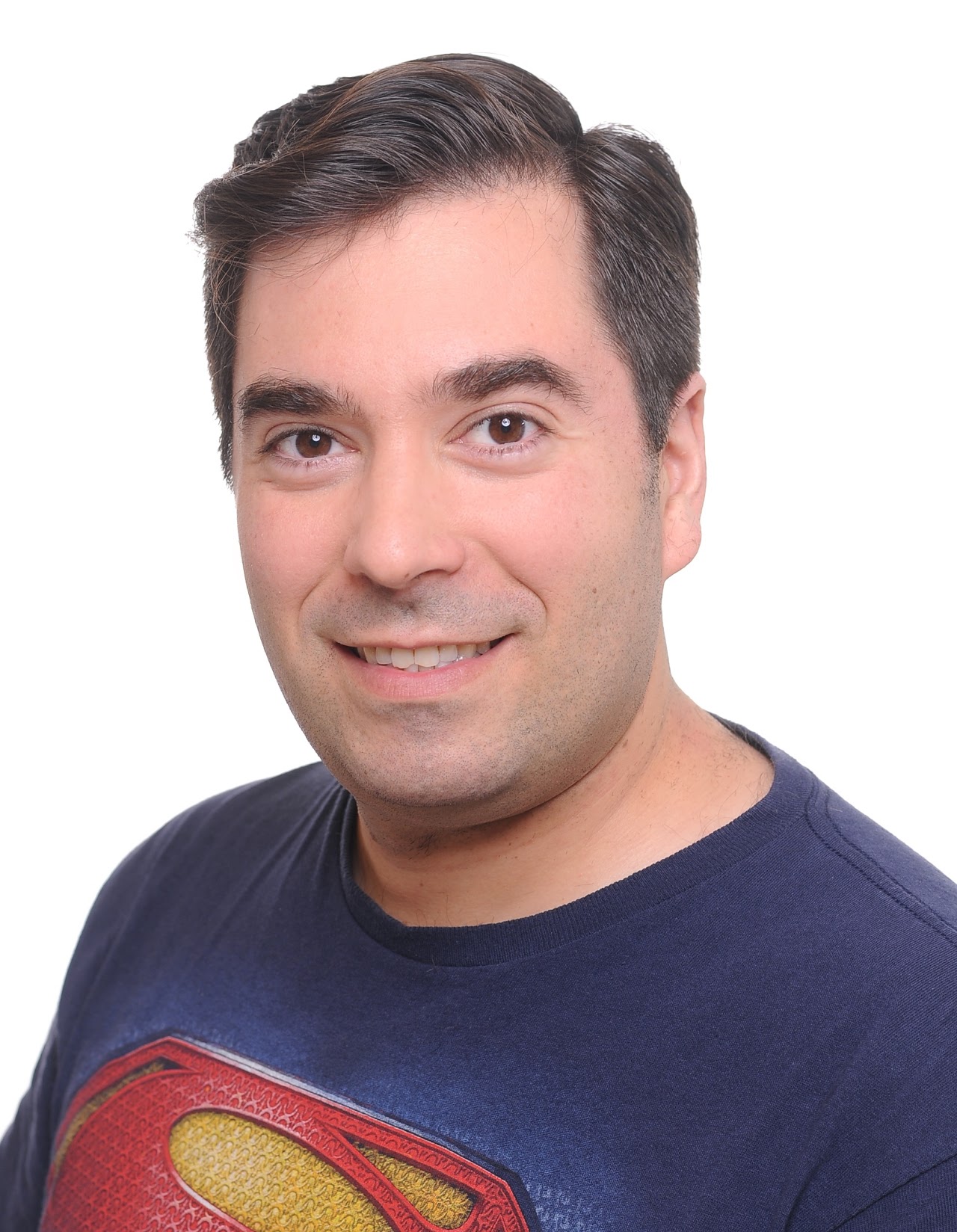}}]{Moti Medina}
is a faculty member at the Engineering Faculty at Bar-Ilan University since 2021. Previously, he was a faculty member at the Ben-Gurion University of the Negev and a post-doc
researcher in MPI for Informatics and in the Algorithms and Complexity group at
LIAFA (Paris 7). He graduated his Ph.\,D., M.\,Sc., and B.\,Sc.\ studies at the
School of Electrical Engineering at Tel-Aviv University, in  2014, 2009, and 2007
respectively. Moti is also a co-author of a  text-book on logic design
``Digital Logic Design: A Rigorous Approach'', Cambridge Univ. Press, Oct.
2012.
\end{IEEEbiography}

\begin{IEEEbiography}[{\includegraphics[width=1in,height=1.25in,clip,keepaspectratio]{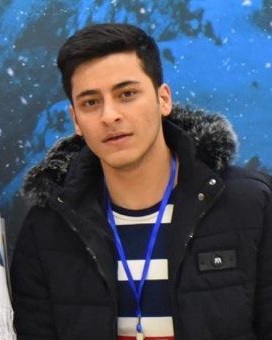}}]{Mehrdad Saberi}
	is an undergraduate student in Computer Engineering at Sharif University of Technology, Tehran, Iran. He achieved a silver medal in International Olympiad in Informatics (2018, Japan) during high school and is currently interested in studying and doing research in Theoretical Computer Science.
\end{IEEEbiography}

\begin{IEEEbiography}[{\includegraphics[width=1in,height=1.25in,clip,keepaspectratio]{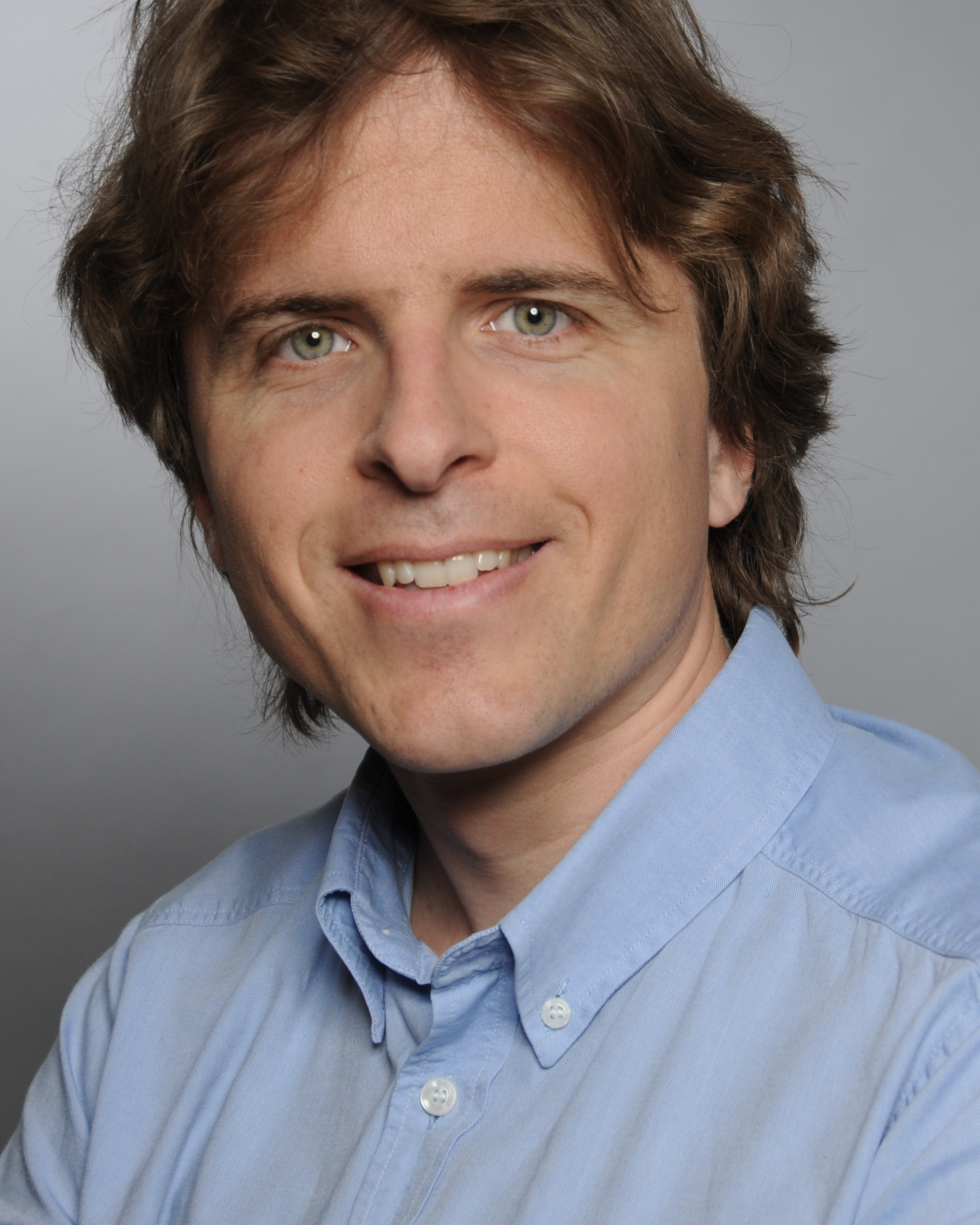}}]{Stefan Schmid}
is a Professor at TU Berlin, Germany. 
He received his MSc (2004) and PhD
(2008) from ETH Zurich, Switzerland. Subsequently, Stefan Schmid
worked as postdoc at TU Munich and the University of Paderborn (2009).
From 2009 to 2015, he was a senior research scientist at the Telekom Innovations Laboratories (T-Labs) in Berlin, Germany, from 2015 to 2018 an Associate
Professor at Aalborg University, Denmark, and from 2018 to 2021 a Professor 
at the University of Vienna, Austria. 
His research interests revolve around algorithmic problems of networked and distributed systems,
currently with a focus on self-adjusting networks
(related to his ERC project AdjustNet) and resilient networks (related to his WWTF project
WhatIf).
\end{IEEEbiography}

\end{document}